\tikzset{                               
node/.style={on grid},                  
}                                       
\newcommand{\mnote}[1]{}                                        
\newtheorem{lem}{Lemma}            
\theoremstyle{definition}
\newtheorem{defn}{Definition}
\theoremstyle{remark}
\newtheorem{rem}{Remark}
\begin{document}

\title[Numerical Expanding Direction of $T^2$-Symmetric Spacetimes]{A Numerical Study of the Expanding Direction of $T^2$-Symmetric Spacetimes}

\author{Beverly K. Berger}
\address{Kavli Institute for Particle Astrophysics and Cosmology,
Stanford University,
Stanford, CA 94305,
USA}
\email{beverlyberger@me.com}

\author{James Isenberg}
\address{Department of Mathematics,
University of Oregon,
Eugene, OR  97403-1222,
USA}
\email{isenberg@uoregon.edu}

\author{Adam Layne}
\email{adam@fnbu.pw}
\address{I. Šimulionio g. 5 - 106,
04330 Vilnius,
Lithuania}

\date{\today}

\begin{abstract}
    The asymptotic behavior of expanding, generic, $T^2$-Symmetric, vacuum spacetimes is examined via numerical simulations. After validation of the numerical methods, the properties of these generic spacetimes are explored and compared to non-generic subfamilies where proven results exist.
    The non-generic subfamilies within this class, including the Kasner, the Gowdy, the pseudo-homogeneous, and the $B=0$ spacetimes, all have known asymptotic behaviors in the expanding direction which have been determined either from the explicit solutions or using analytic methods. For the $B\ne 0$ spacetimes, the generic case within the $T^2$-Symmetric vacuum solutions, the asymptotic behavior has not been determined analytically. In this work, we use numerical simulations to explore the asymptotic behavior of the $B\ne 0$ spacetimes. Our results indicate that, for these generic spacetimes, the asymptotic behavior in the expanding direction differs from that seen in the non-generic subfamilies. 
   In addition to differences in asymptotic power laws, an apparent quasi-periodic exchange of energy from one gravitational mode to the other for the generic non-polarized solutions is observed. 
\end{abstract}

\maketitle

\section{introduction}

Although very little is known about the behavior of general solutions of Einstein's equations -- with or without matter sources -- studies of families of solutions of these equations characterized by the presence of nontrivial isometries have been carried out which could provide some insight into the behavior of solutions which do not have any symmetries.
The Friedman--Lema\^{i}tre--Robinson--Walker (FLRW) solutions, which are spatially homogeneous, isotropic, and have fluid source fields, are well understood. In recent work \cite{MR3874696}, it has been shown that, for initial data sufficiently close to FLRW initial data (with no isometries presumed), the behavior near the singularity of solutions developed from this data is much like that of the FLRW solutions.
Numerical studies \cite{MR1858721,PhysRevD.79.123526,GARFINKLE2020}
have also indicated that the ``local Mixmaster'' behavior near the singularity of vacuum solutions of the Einstein equations with \(SU(2)\) spatial isometry is observed in a wider class of solutions.\footnote{There is some numerical evidence \cite{PhysRevD.79.123526} that local Mixmaster dynamics does not describe the complete behavior of these spacetimes in the collapsing direction.}
One particular family of solutions with nontrivial isometries for which a significant amount has been learned are those which are labeled as the ``\(T^2\)-Symmetric'' spacetimes.
These spacetimes all have Cauchy surfaces with \(T^3\) topology, and have the isometry group \(T^2\) acting on and preserving these Cauchy surfaces.
It has been proven that the \(T^2\)-Symmetric vacuum spacetime solutions all admit global areal foliations \cite{MR1474313} as well as constant mean curvature (CMC) foliations \cite{Andreasson:2002ke} which distinguish the expanding and contracting directions. We emphasize that the spacetimes in this class do not describe the physical universe in any way, but the study of this class should serve as a theoretical laboratory to gain insight into the behavior of generic solutions of Einstein’s equations.

While most studies of \(T^2\)-Symmetric spacetimes have focused on the behavior in the contracting direction -- near the singularity -- there has recently been progress in understanding how these solutions behave in the expanding direction.
The work of LeFloch and Smulevici \cite{MR3513138} as well as that of the authors of this paper \cite{MR4062458} shows that, for certain restricted subfamilies of the \(T^2\)-Symmetric spacetimes, there are local attractors (in a certain sense which we discuss below) which characterize the evolution of these vacuum spacetimes in the expanding direction.
The question then arises if there are local attractors in the expanding direction for the general class of \(T^2\)-Symmetric spacetimes.
More importantly, one might speculate if the set of all vacuum solutions of the Einstein equations -- perhaps restricted to those which have \(T^3\) Cauchy surfaces, and therefore have distinguished expanding and contracting directions -- have local attractors.

We do not address here this last question concerning the behavior of general vacuum solutions.
Rather we focus on establishing evidence for the existence of local attractors for the general \(T^2\)-Symmetric spacetimes. Note that in this paper, we use ``attractor'' in a sense similar to that used in \cite{MR4062458} such that the asymptotic dynamics of the appropriate spatially averaged variables approaches a ``sink''. Such behavior is shown in some detail below. In addition, the asymptotic values of the main variables are dominated by their spatial averages (functions of the time variable $\tau$ only) as is shown in Table I.
Unlike the theorems in \cite{MR3513138,MR4062458} which prove that the polarized and the \(B=0\) subfamilies of the \(T^2\)-Symmetric spacetimes (we define these and other subfamilies of solutions below) have local attractors in the expanding direction, the evidence which we present here for the existence of local attractors in the unrestricted family of  \(T^2\)-Symmetric spacetimes is based on numerical simulations of the evolution of these solutions.
While such numerical studies are not as deterministic as mathematical theorems, we note that it was a set of numerical simulations which led us to conjecture the behavior which we have proven \cite{MR4062458} holds for the \(B=0\) subfamilies of these spacetimes.
We also note that the numerical investigations discussed in this paper indicate further interesting behavior within these spacetimes.
In particular, the numerical simulations indicate that, presuming both of the gravitational modes are activated, then the ``energy'' periodically sloshes equally from one mode to the other, in the manner of the ``equipartition behavior'' observed in coupled harmonic oscillator systems.
We discuss this property in Section \ref{sloshing}.

\subsection{Coordinates and the Einstein Equations for $T^2$-Symmetric Spacetimes}

The \(3+1\) dimensional \(T^2\)-Symmetric spacetimes under consideration all admit a global foliation by spacelike hypersurfaces in terms of which the metric takes the following form \cite{MR1474313}:
\begin{align}
    g
    =	&	e^{\widehat l - V + 4\tau} \left( - d\tau^2 + e^{2(\rho - \tau)} d\theta^2 \right) 	+e^{V} \left[ dx + Q dy 
    + (G + Q H ) d\theta \right]^2 	+  e^{-V+2\tau} \left[ dy + H d\theta \right]^2 \,.				\label{coords}	
\end{align}
In this formula, the metric coefficients \(V, Q, \rho, \widehat l, G, H\) are all functions of the time coordinate \(\tau\) and the (periodic) spatial coordinate \(\theta\).
The vector fields \(\partial_x\) and \(\partial_y\) are the Killing vector fields which generate the \(T^2\) isometry.
The areas of the orbits of these vector fields are a constant multiple of \(e^{2\tau}\), so that the spacetime singularity occurs as \(\tau \to - \infty\), and the spacetime expands as \(\tau\) becomes increasingly positive.
In \cite{MR4062458}, we describe the relationship between the coordinates and variables that we use here and those used in other studies of \(T^2\)-Symmetric spacetimes.

In terms of the metric coefficients which we use here, the Einstein field equations take the form
\begin{align}
    \partial_\tau \left( e^\rho V_\tau \right)
    =& \partial_\theta \left( e^{2\tau - \rho} V_\theta \right)+ e^{2(V-\tau)+\rho} \left( Q_\tau^2 - e^{2(\tau-\rho)} Q_\theta^2 \right) \label{v transport}	\ ,				\\
    \partial_\tau \left( e^{\rho + 2 (V - \tau)} Q_\tau \right)  \label{q transport}
    =& \partial_\theta \left( e^{-\rho +2 V} Q_\theta \right) \ ,				\\
    \widehat l_\tau + \rho_\tau + 2 
    =&\frac{1}{2} \left[ V_\tau^2 + e^{2(\tau-\rho)} V_\theta^2 + e^{2(V-\tau)} \left( Q_\tau^2 + e^{2(\tau-\rho)}  Q_\theta^2 \right) \right]\label{l-evol} 		\ ,			\\
    \rho_\tau
    =&  \frac{K^2}{2}e^{\widehat l}\label{rho-evol}	\ ,				\\
    \widehat l_\theta=&V_\theta V_\tau + e^{2(V-\tau)} Q_\theta Q_\tau \label{constraint} \ ,	
    \\
    G_\tau =&-Q \, e^{\rho + \widehat l} \, K \label{Geq}\ ,
    \\
    H_\tau =& e^{\rho + \widehat l} \, K \label{Heq} \ ,
\end{align}
where we use the subscripts \(\tau,\, \theta\) to indicate partial derivatives.  The twist, $K$, is defined below in Eqs. \eqref{x twist} and \eqref{y twist}. For vanishing twist, Eqs. \eqref{v transport} and \eqref{q transport} are wave equations for the two gravitational wave polarizations here written in terms of $V$ and $Q$. In this case, the remaining variables may be constructed from any given solution $\{V,Q \}$. Eqs \eqref{l-evol} and \eqref{constraint} are respectively the Hamiltonian and $\theta$-momentum constraint while Eq. \eqref{rho-evol} vanishes identically if $K = 0$.
In these equations, the quantity \(K\) represents the ``twist constant'' of the Killing vector fields.
For general \(T^2\)-Symmetric spacetimes, in terms of the coordinate systems which we use here, there are generally two non-vanishing twist constants which take the following form:
\begin{align}
    K(\partial_x)	=	&	- e^{\widehat l - \rho+2(V-\tau)}   \left( G_\tau + Q H_\tau\right) ,	\label{x twist}\\
    K(\partial_y)	=	&	- e^{\widehat l - \rho} \left[ H_\tau + e^{2(V-\tau)} Q \left( G_\tau + Q H_\tau\right) \right].\label{y twist}
\end{align}
We note, however, that, without loss of generality, we can always adjust these coordinates so that \(K(\partial_x)\) vanishes (see footnote 1 in \cite{MR4062458}).
For the family of solutions we consider here, the remaining twist constant \(K(\partial_y)\) in Eq. \eqref{y twist} does not vanish, and we label it \(K\). The remaining metric variables $G$ and $H$ are non-dynamical in the sense that, although they  satisfy equations, none of the previously listed Einstein field equations depend on their values. For a complete discussion, see \cite{MR1858721} where they are called $G_1$ and $G_2$ respectively.
For the subfamily of Gowdy spacetimes, both twist constants vanish.
Our focus in this paper is on the non-Gowdy \(T^2\)-Symmetric spacetimes, so we presume that \(K\ne0\).
For such spacetimes, it is useful to make the following substitution:
\begin{align}
    l := \widehat l + \ln (K^2/2) \label{lsub}.
\end{align}
In our discussion of the non-Gowdy, \(T^2\)-Symmetric Einstein evolutions below, we make this substitution and refer to \(l\) rather than to \(\widehat l\) although, for clarity, we continue to use $\widehat l$ in referring to equations prior to Eq. \eqref{lsub}. Starting with Eq. \eqref{a def}, $l$ is used in place of $\widehat l$.

Equation \eqref{l-evol} is actually a consequence of the Einstein constraint equations, but we use it as an evolution equation here; see Section 1 of \cite{MR4062458} for additional discussion.\footnote{
This also occurs, for example, in isotropic models where the evolution of the spatial metric scale factor is determined by the Hamiltonian constraint.}
Equations \eqref{v transport} through \eqref{rho-evol} specify the evolution for \(V, Q,\rho\), and \(\widehat l\).
Equation \eqref{constraint}	 is a constraint, viz. the momentum constraint, on the initial data of a given solution.
Equations \eqref{x twist} and \eqref{y twist} constitute first order evolution equations for \(G\) and \(H\), and are semi-decoupled from Eqs. \eqref{v transport}-\eqref{constraint} in the sense that Eqs. \eqref{v transport}-\eqref{constraint} do not involve \(G\) and \(H\); consequently, the evolution of these two quantities can be carried out after the solution of Eqs. \eqref{v transport}-\eqref{constraint} has been determined.

To define the subfamilies of the \(T^2\)-Symmetric spacetimes for which a number of strong results have been proven, it is useful to note that there are two quantities constructed as spatial averages of the field variables which are conserved by the Einstein evolution equations,  not only in all the subfamilies, but also in the generic case:
\begin{align}
    A:=&\int_{S^1}  e^\rho \left( V_\tau  - e^{2(V-\tau)} Q_\tau Q \right) \, d\theta,	\label{a def}		\\
    B:=&\int_{S^1}  e^{\rho+2(V-\tau)} Q_\tau  \, d\theta.			\label{b def}
\end{align}
The value of the conserved quantity $A$ does not apparently play a major role in distinguishing subfamilies for which results regarding the asymptotic behavior (in the expanding direction) have been determined either analytically or from numerical simulations. In contrast, the value of the conserved quantity $B$ -- whether it is zero or not -- is important in this regard. Based on the vanishing or not of $B$ along with other characterizations of the $T^2$-Symmetric solutions listed below in Definition \ref{subfamilies}, one can define $12$ subfamilies of the full family, including the generic class. These are schematically diagrammed in Fig. \ref{venn}. Note that, as a consequence of results proven in Section \ref{familyreduce}, two of the expected subfamilies do not exist, so only $10$ subfamilies are indicated in Fig. \ref{venn}.
\begin{defn}\label{subfamilies}
    Let \((\widehat l, V, Q , \rho)\) specify a solution of Eqs. \eqref{v transport}-\eqref{constraint}, and let \(g\) be the metric corresponding to that solution.
    \begin{itemize}
        \item
        \(g\) is \emph{pseudo-homogeneous} (PH) if the metric coefficients \(V, Q\) and \(\widehat l\) are spatially constant \cite{MR3312437}.
        \footnote{
        The definition of pseudo-homogeneous in \cite{MR3312437} includes the condition that the solutions are non-Gowdy.
        We use a more general definition here.
        }
        \item
        \(g\) is \emph{polarised} if \(Q \equiv 0\).
        \footnote{It is more usual to define polarized spacetimes using a coordinate free, geometric condition. For the spacetimes under consideration, this geometric condition is equivalent to the one we give here.
        See footnote 1 of \cite{MR4062458}.}
        \item
        \(g\) is \emph{Gowdy} if \(H,G\), and \(\rho\) all vanish identically (and thus \(K = 0\)).
          \end{itemize}
       {\it Remark}. {We are abusing definitions slightly here.
        The definition given here for the Gowdy solutions \cite{GOWDY1974203} corresponds to the subset of Gowdy solutions considered in \cite{MR3312437}.
        Such solutions do not encompass the entire Gowdy class.
        More generically, \emph{Gowdy} solutions are those for which \(K = 0\) and \(\rho \equiv 0 \).
        Together with the formulas for the twists \eqref{x twist} and \eqref{y twist}, this implies that \(G_\tau = H_\tau = 0\), although $G$ and $H$ may generically have spatial variation.
        Since the equations for \(G,H\) are semi-decoupled from the evolution equations of the remaining functions, the class of solutions considered in \cite{MR3312437} still encompasses all the possible behavior of \(\widehat l, V,Q\) that could occur in this larger class.}
  
\end{defn}

\begin{figure}
    \centering
    \includegraphics[width=0.4\textwidth]{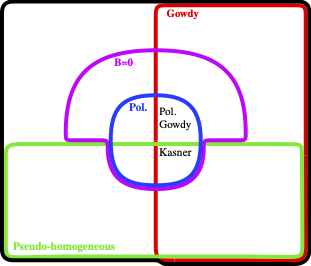}
    \caption{The subfamilies of \(T^2\)-Symmetric vacuum spacetimes on \(T^3\). This diagram represents the $10$ subfamilies of the $T^2$-symmetric solutions. The subfamilies are defined by intersections of the characteristics listed in Definition \ref{subfamilies}, along with the vanishing or not of the conserved quantity $B$. There are $10$ rather than $12$ subfamilies as a consequence of the results proven in Section \ref{familyreduce}. The generic subfamily is in the upper left corner. We note that the figure is schematic in the sense that the shapes and the relative scales of the regions corresponding to the subfamilies are arbitrary and selected for clarity.}\label{venn}
\end{figure}
\mnote{JIM: Are we only discussing in figure 1 asymptotic behavior for solutions which is known from mathematical results, or are we also indicating in figure 1 the asymptotic behavior which we have learned from these numerical simulations? We should be clear about this.}

Our focus in this paper is on numerical simulations of the evolutions (in the expanding direction) of $T^2$-Symmetric solutions which are not contained in the nine special subfamilies indicated in Fig. \ref{venn}; i.e., on the solutions in the upper left corner of this figure.
However, to best understand the behavior of the general \(T^2\)-Symmetric spacetimes, it is useful to contrast the behavior of some of the nine special subfamilies with that seen in the general  \(T^2\)-Symmetric spacetimes.
We do this throughout the paper.
Since numerical simulations of certain nonlinear PDE systems play a central role in this work, it is important that we describe the procedure that we have used for carrying out these numerical studies.
We do this in section \ref{numerical}.
Also in that section, we discuss some of the features of our numerical work which we believe lead us to expect that the behavior seen in the numerical simulations occurs generally in $T^2$-Symmetric solutions. Currently, this belief relies on numerical results from many simulations arising from what appears to be a numerical approximation of an open set of initial data sets.
Ideally, we would like to prove that the observed behavior in the numerical simulations does occur generally.
We have not yet been able to do this.
However, in Section \ref{impliedasymptotic}.B, we outline a strategy that we hope will lead to a theorem confirming that the behavior we observe numerically is the generic behavior for \(T^2\)-Symmetric spacetimes in the expanding direction. Whether or not this strategy does lead to our proving mathematical results describing the asymptotic behavior in the expanding direction of $T^2$-Symmetric vacuum solutions, we believe that these numerical simulations provide a very useful guide for what behavior to expect to verify for these spacetimes.

\subsection{Questions we address in this paper.}
As we discuss in Section \ref{math results}, certain behaviors in the expanding direction have been mathematically verified for the non-generic subfamilies of the $T^2$-Symmetric solutions. In particular, as we detail in Section \ref{math results}, it has been proven that the spatial integrals of certain of the metric components in these subfamilies of solutions have prescribed asymptotic behavior in the expanding direction, characterized by attractors. One is then led to question whether these subfamily attractors continue to be attractors for the generic $T^2$-Symmetric solutions. The numerical results which we present in this paper indicate that this is not the case; we obtain numerical evidence that the generic $T^2$-Symmetric solutions approach attractors in the expanding direction which are different from those for the non-generic subfamilies of solutions.
\mnote{?: actually enumerate what is proved and partially proved -- BKB cannot do this}

In this paper and our previous paper \cite{MR4062458}, we consider only $T^2$-Symmetric vacuum spacetimes with non-zero twist constant (see \cite{MR1474313} for a more general discussion of these spacetimes) in the expanding direction. The zero twist case (Gowdy spacetimes) was first considered by one of us for polarized models \cite{BERGER1974458} and later by many others. The expanding direction has been explored in detail by Ringstr\"om \cite{MR2217283}. He uses the spatially integrated constants $A$,$B$, and $C$
to identify a type  of expanding behavior in the Gowdy spacetime. 
We note that $A$, $B$ are the conserved quantities defined above in equations \eqref{a def} and \eqref{b def}, while $C$, defined as
\begin{align}
    C:= \int_{S^1} \left[ \,Q_\tau \left( 1 - e^{2(V-\tau)} Q^2 \right) + 2 Q (V_\tau -1)\,\right] \, d\theta\ , \label{c def}
\end{align}
is a quantity which is conserved in Gowdy spacetimes but not in $T^2$-Symmetric solutions with non-zero twist.
$A$, $B$, and $C$ can be combined to yield 
\begin{align}
    \zeta := A^2 +BC. \label{zeta}
\end{align}

 In \cite{MR3312437}, Ringstr\"om shows that for the evolution in the expanding direction of non-Gowdy vacuum $T^2$-Symmetric spacetimes restricted to what we now know as the $B=0$ subclass, there appears to be an attractor (which has also been found in numerical simulations).
The polarized case with non-zero twist has been shown in \cite{MR3513138} to have the asymptotics indicated in Table I for the $B=0$ subclass (but with the restriction to small data). In \cite{MR4062458} we show that the $B = 0$ subclass (not necessarily polarized) has the asymptotics listed for that class in Table I. Here we provide the numerical analysis of the generic case with $B \ne 0$.  The asymptotic behavior listed in Table I describes the ``attractor'', as we use the term in \cite{MR4062458}, for expanding $T^2$-Symmetric spacetimes. While the behavior of the attractor for the $B = 0$ class is proven in \cite{MR4062458}, that for the generic case is only known numerically.

\section{Mathematical Results for Non-generic Subfamilies}\label{math results}
\mnote{JIM: draft rewrite is below at Section \ref{jim draft 1}}
Before we present our numerical simulations and the implications we infer from them regarding the behavior of generic $T^2$-Symmetric solutions in the expanding direction, we summarize in this section what has been determined mathematically for the solutions contained in the nine special subfamilies of solutions. We restrict our attention to the mathematically proven behavior in the expanding direction. For those subfamilies for which explicit solutions have been found, we determine the asymptotic behavior of those subfamilies directly from these explicit solutions.

\subsection{Reduction of the Number of Subfamilies of $T^2$-Symmetric Vacuum Solutions}\label{familyreduce}

In this subsection, we state and prove a lemma which eliminates two of the possible subfamilies of $T^2$-Symmetric solutions indicated by the conditions in Definition \ref{subfamilies}, so that only nine special subfamilies remain, as discussed above.

\begin{lem}
    Solutions which are both PH and \(B=0\) are polarized.
\end{lem}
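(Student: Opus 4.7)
The plan is to combine the PH hypothesis with the expression for $B$ to force $Q_\tau \equiv 0$, then argue that a constant $Q$ is equivalent to $Q\equiv 0$ after a trivial linear change of the Killing coordinates.

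First I would unpack the two hypotheses. PH gives that $V$, $Q$ and $\widehat l$ are functions of $\tau$ alone, so in particular $Q_\theta\equiv 0$ and $Q_\tau$ is independent of $\theta$; similarly $V_\tau$ and $e^{2(V-\tau)}$ depend only on $\tau$. The metric function $\rho(\tau,\theta)$ is not assumed spatially constant, but for the purpose of evaluating $B$ one only needs it to be a positive exponential.

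Next I would plug directly into the definition \eqref{b def}:
\begin{align*}
B \;=\; \int_{S^1} e^{\rho+2(V-\tau)}\, Q_\tau \, d\theta
\;=\; Q_\tau\, e^{2(V-\tau)} \int_{S^1} e^{\rho(\tau,\theta)}\, d\theta,
\end{align*}
where the last equality uses that $Q_\tau$ and $e^{2(V-\tau)}$ are $\theta$-independent under PH. Since $e^{\rho}>0$ pointwise, the integral $\int_{S^1} e^{\rho}\,d\theta$ is strictly positive, and $e^{2(V-\tau)}>0$. Therefore the hypothesis $B=0$ (which, being a conserved quantity, holds at every $\tau$) forces $Q_\tau(\tau)\equiv 0$. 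Combined with $Q_\theta\equiv 0$, this shows that $Q$ is a constant.

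Finally I would invoke the coordinate freedom in the $(x,y)$ sector to upgrade "$Q$ is a constant $c$" to "$Q\equiv 0$": the linear redefinition $x\mapsto x-cy$ leaves the form \eqref{coords} of the metric intact while replacing $Q$ by $Q-c=0$; this is the same sort of $T^2$ coordinate adjustment used in the treatment of the twists (cf.\ footnote~1 of \cite{MR4062458}). Under the paper's convention this is exactly the polarized condition, completing the proof. I expect the only conceptual subtlety to be this last step, namely justifying that "constant $Q$" counts as polarized; the rest is a one-line calculation once PH is used to pull $Q_\tau e^{2(V-\tau)}$ out of the integral defining $B$.
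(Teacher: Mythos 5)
Your proposal is correct and follows essentially the same route as the paper's proof: use the PH hypothesis to pull the $\theta$-independent factor $e^{2(V-\tau)}Q_\tau$ out of the integral defining $B$, conclude $Q_\tau\equiv 0$ from positivity of $\int_{S^1}e^{\rho}\,d\theta$, and absorb the resulting constant $Q$ by a linear change of the Killing coordinates (the paper writes this as $d\widetilde x = dx + Q\,dy$, $d\widetilde y = dy$). The only difference is cosmetic; just check the sign in your coordinate shift so that the one-form $dx+Q\,dy$ becomes exactly $d\widetilde x$.
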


\begin{proof}
    The PH and B=0 conditions together imply
    \begin{align}
        0=B=\int_{S^1} e^{\rho + 2(V - \tau) } Q_\tau \, d\theta = e^{ 2(V - \tau) } Q_\tau \int_{S^1} e^\rho\, d\theta. \label{PH lemma}
    \end{align}
    Note that the first and last factors on the right side of equation \eqref{PH lemma} are strictly positive.
    It follows that  \(Q_\tau = 0\).
    In this case we may make a linear substitution \(d \widetilde x = dx + Q \, dy , d\widetilde y = dy\) to obtain a solution which satisfies \(Q \equiv 0\).
\end{proof}

\subsection{Kasner solutions}

For the purposes of this paper, we describe only the relevant properties of the Kasner solutions, i.e., those properties related to the behavior of these solutions in the expanding direction. The metric \eqref{coords} yields the Kasner spacetime if spatial homogeneity is assumed (no dependence on $\theta$). This assumption allows the choices $\rho =$ constant, $Q=0, \ G = 0, \ H=0$ leading to the solution (where $A$ is defined in \eqref{a def})
\begin{align}
    V = A \tau + b , \quad \widehat l = \left(\frac{1}{2}A^2 - 2 \right) \tau + c
\end{align}
for some constants \(A,b,c \in \mathbb R\) \cite{MR1501305}.

\subsection{Polarized Gowdy solutions}

The next level of complexity is to allow dependence on $\theta$ as defined in \eqref{coords}. These models have been first discussed by Gowdy \cite{GOWDY1974203} for all allowed spatial topologies ($T^3$, $S^3$, and $S^2 \times S^1$). In this paper, we restrict our discussion to the solutions with the $T^3$ topology. For the polarized case, there is a family of  explicit solutions in terms of zero order Bessel functions (see \cite{BERGER1974458}) satisfying the time-dependent-frequency wave equation obtained from \eqref{v transport} specialized to the polarized Gowdy case (zero $Q$ and constant $\rho$). Most interestingly, we note  for these explicit solutions, while $V$ has the same asymptotic behavior as the explicit Kasner solutions, $V_\tau$ is dominated as $\tau \to \infty$ by the wave energy term which grows as $ e^{\tau/2}$  rather than $\tau$ so that $\widehat l$ grows exponentially in $\tau$. Ringstr{\"o}m has called this qualitative change in the dynamics an instability \cite{MR3186493}. 

In the polarized case, 
\(V\) satisfies a linear wave equation with an explicit solution in terms of zero-order Bessel functions and thus inherits their known asymptotics.
These asymptotics can then be used to obtain the asymptotics of $\widehat l$ via Eq. \eqref{l-evol}.
Note that, in addition to the wave modes, there is a zero mode Kasner component that survives if all the coefficients of the spatially dependent terms vanish. 

\subsection{Nonpolarised, PH Gowdy models}
Solutions which are Gowdy and spatially homogeneous (this is equivalent to the PH condition in Gowdy spacetimes) but which are non-polarised have been considered in \cite{MR3312437}.
The solutions which are Gowdy, PH (and thus spatially homogeneous), and satisfy the condition \(B = 0\) (and so are polarized) are the \(T^3\) Kasner solutions mentioned above.

\subsection{Generic Gowdy spacetimes \label{generic Gowdy}}

Generic Gowdy solutions add the second gravitational-wave polarization $Q$ to the equations, which now take the form of two-degrees-of-freedom wave equations with nonlinear couplings. The constants $A$, $B$, and $C$ from Eqs. \eqref{a def}, \eqref{b def}, and \eqref{c def} define the dynamics of this case. The combination $\zeta = (A^2 + BC)$ has been shown by Ringstr{\"o}m \cite{MR2032917} to be especially relevant. Note that these constants involve spatial averages of nonlinear functions. This means that we can no longer assume that these averaged constants inherit the properties that the same combinations of nonlinear functions in the integrands would have in the homogeneous case. 

Ringstr{\"o}m shows \cite{MR2032917} that if $\zeta > 0$, the asymptotics of the model follows the polarized values. However, he has also determined that if $\zeta < 0$, there is a new class of solutions whose asymptotics are less clear. We point out that the definitions of $A$, $B$, and $C$ indicate that $\zeta$ must be $> 0$ if $B =0$ but {\it{may}} be $< 0$ if $B \ne 0$. (The actual sign of $\zeta$ depends on the magnitude of $BC$ compared to $A^2$). 


In this subsection, it is convenient to use $P:=V-\tau$ and $\lambda := 2\widehat l - 2 P + \frac{3}{2} \tau$ rather than $V$ and $l$ (as defined in Eq. \eqref{lsub}). In \cite{MR2032917}, Ringstr\"om describes the full asymptotics in the expanding direction of the Gowdy family of spacetimes. Note that this section focuses on the generic case of the Gowdy family that also includes the polarized Gowdy and PH Gowdy families. These spacetimes have a property that does not carry over to generic $T^2$-Symmetric spacetimes — namely, that the ``wave equations’’ for $P$ and $Q$ decouple from the ``background’’ metric variable $\lambda$. 

Given a solution of the wave equations for $P$ and $Q$, $\lambda$ can then be constructed therefrom. Note that $\lambda_\tau$ and $\lambda_\theta$ are determined (respectively) by the Hamiltonian constraint \eqref{l-evol} and the momentum constraint \eqref{constraint}. As mentioned above, the dynamics of this spacetime family can be expressed in terms of the integrated (over $S^1$) constants $A$, $B$, and $C$ where $A$, $B$, and $C$ are defined in \eqref{a def}, \eqref{b def}, and \eqref{c def} respectively. While $A$ and $B$ are also constants in the generic $T^2$-Symmetric family, $C$ is not. This means that there is no analog for the asymptotics of the Gowdy family in more generic families. Ringstr\"om has recognized that a particular combination of $A$, $B$, and $C$ — $\zeta$ from \eqref{zeta} — could have either sign. While the integrand is positive definite at any fixed value of $\theta$, the mean of a nonlinear function need not equal the function of the means of the constituents. Thus Ringstr\"om shows in \cite{MR2032917} that $P$ and $Q$ have power-law asymptotics if $\zeta > 0$ and oscillatory (bounded by $\pm$ a constant) asymptotics if $\zeta < 0$.\footnote{This oscillatory behavior is unrelated to the ``sloshing'' behavior in generic $T^2$-Symmetric spacetimes which we discuss below.}

For all values of $\zeta$, the asymptotic form of the spatial average of $\lambda$ is $\langle \lambda \rangle \to e^{{\tau}}$ \cite{MR2032917} -- the same as for the polarized Gowdy case. This means that the asymptotic behavior of the spatial average of $\lambda$ is linear in $\tau$ both for Kasner and for generic $T^2$-Symmetric spacetimes but exponential in $\tau$ for all (vacuum) Gowdy spacetimes. This behavior is a manifestation of instability as defined by Ringstrom \cite{MR3186493} that can be understood by examining the dominant term in the energy that provides a source for $\lambda$ \eqref{l-evol} as $\tau \to \infty$.

The much simpler case of polarized Gowdy asymptotics may be analyzed by setting $Q = 0$ for any of the above results.

There is, in fact, a homeomorphism between initial data on a chosen Cauchy surface and asymptotic data in the expanding direction \cite{MR1955788,MR2136117,2017arXiv170702803R} for the Gowdy spacetimes but this does not remain true for the generic $T^2$-Symmetric solutions.

\subsection{Polarized $T^2$-Symmetric spacetimes}

For the polarized (non-Gowdy) $T^2$-Symmetric spacetimes, there are no explicit solutions. However, in \cite{MR3513138}, LeFloch and Smulevici have mathematically proven that, for small initial data (in an appropriate sense, as detailed in \cite{MR3513138}), the asymptotic behavior in the expanding direction of the variables $V$, $\rho$, $l$ and their derivatives matches that indicated in Table \ref{tab}. Included in their results is the determination that spatial averages of some of these metric variables do evolve towards a distinct attractor, which is different from the attractor for the Gowdy spacetimes. This shows that the asymptotic behavior of the Gowdy spacetimes is not stable. The method used by LeFloch and Smulevici to prove their results involves careful determination of estimates for integrals of the metric coefficients, and the derivation of evolution equations for the integrals, with error terms playing a big role. The details of this analysis are presented in \cite{MR3513138}. We do note that this family of spacetimes necessarily has the quantity $B$ equal to zero.

\subsection{Generic $T^2$-Symmetric spacetimes with $B=0$}

If we remove the polarization condition $Q=0$, but continue to enforce the condition $B=0$, we get a subfamily of the $T^2$-Symmetric spacetimes which includes the polarized $T^2$-Symmetric solutions discussed above, but generalizes them significantly. As shown in \cite{MR4062458}, the method of analysis developed in \cite{MR3513138} generalizes to be applicable to these “$B_0$” solutions, and we have obtained a result very similar to that of \cite{MR3513138} for this much larger subfamily of spacetimes. In particular, we have shown in \cite{MR4062458} that for appropriately small initial data, the asymptotic behavior of the $B_0$ solutions is as described in Table \ref{tab}, and despite the presence of the nonzero $Q$ term, this asymptotic behavior is very similar to that seen in the polarized subfamily. Moreover, as in the polarized solutions, spatial averages of the metric components for the $B_0$ spacetimes do asymptotically approach an attractor.

A key step in generalizing the analysis of \cite{MR3513138} to the $B_0$ case is controlling the term $\int_{S^1} e^\rho V_\tau d\theta$. In the polarized case, this quantity is conserved during evolution. For the $B_0$ solutions, it is no longer conserved, but one verifies that the $B=0$ condition ensures that this quantity is bounded, consequently allowing an analysis similar to that in \cite{MR3513138} to be carried through in \cite{MR4062458}. Notably, if we consider solutions with $B\ne 0$, this boundedness no longer holds, which does not allow the analysis used in \cite{MR3513138} and \cite{MR4062458} to be used for the $B\ne 0$ case.

Interestingly, before we recognized that the $B_0$ solutions constitute a distinct subfamily of the $T^2$-Symmetric spacetimes, we did notice in our numerical simulations that there was anomalous asymptotic behavior. Further exploration showed us that one class of asymptotic behavior occurs for the $B_0$ solutions while a different class of asymptotic behavior occurs for the solutions with $B\ne 0$. Consequently, the asymptotic behavior for the $B_0$ solutions is not stable in the full set of $T^2$-Symmetric spacetimes. The graphs of the numerical solutions presented below illustrate the different behavior of the $B_0$ asymptotics compared to the asymptotic behavior for solutions with $B\ne 0$.

Although the lack of control of the quantity $\int_{S^1} e^\rho V_\tau d\theta$ for the $B\ne 0$ solutions prevents the use of the analysis of \cite{MR3513138} and \cite{MR4062458} for mathematically proving that the asymptotic behavior seen in our numerical simulations occurs necessarily, we do provide in Section \ref{future}  a rough framework for how one might mathematically verify the asymptotic behavior seen in the $B\ne 0$ solutions. In future work, we hope to be able to determine whether this approach or an alternative one is likely to be successful.

Since the asymptotics of the $B = 0 $ and $B \ne 0$ spacetimes are different, most of the figures displaying numerical results show the results of simulations for both cases. 
In the absence of mathematical proof, the numerical simulations presented in this paper provide the only currently available way to infer the expected asymptotic behavior in the $B \ne 0$ case.

\section{Numerical simulations}

\subsection{Numerical solution of the constraints in these coordinates.}\label{numerical}

The coordinate form for the metric \eqref{coords} is used in \cite{MR4062458} and is particularly useful in describing the qualitative behavior in the expanding direction.
The numerical work we discuss here, however, makes use of the quantities \(P,\pi_P,Q,\pi_Q,\lambda,\pi_\lambda\) of \cite{MR1858721}.
The choice of these variables is only a matter of convenience; our code for this project is based on code previously developed by the first author using the quantities in \cite{MR1858721}.
For the discussion here it is only important to note that in these coordinates, the momentum constraint \eqref{constraint} takes the form
\begin{align}
    P_\theta \pi_P + Q_\theta \pi_Q + \lambda_\theta \pi_\lambda = 0 \label{pi constraint}
\end{align}
subject to the condition \(\pi_\lambda>0\).\footnote{
Note that comparing the coordinates of the present paper to those used in \cite{MR1858721}, we find that \(e^\rho = 2 \pi_\lambda\).
The coordinates of \cite{MR1858721} are derived from those in \cite{MR1474313} where the metric has the form
\begin{align}
    g= e^{2(\nu - U)}\left( - \alpha \, dt^2 + d\theta^2 \right) + \cdots
\end{align}
and \(\pi_\lambda := \frac{1}{2}\alpha^{-1/2}\).
}
We refer to \cite{MR1858721} for a more complete discussion of these variables, including the evolution equations. (However, we do note that the time coordinate in this paper is the negative of the time coordinate in \cite{MR1858721}.)
All of our numerical simulations 
 use \(K = 1\).
In view of Eq. \eqref{lsub} this assumption is generic; for every non-Gowdy, \(T^2\)-Symmetric solution with a value of \(K \ne 1\) there is a solution with \(K=1\) which may be obtained by adding a constant to \(\widehat l\).

In order to numerically simulate solutions of the
Einstein field equations, we must begin with initial data that solve the constraint equations.
Furthermore, if one wants to find reliable information about generic solutions of the field equations, one must begin with solutions of the constraints which are as generic as possible.
The issue of generic solutions of the numerically simulated constraints is fundamental for this project.
Previous mathematical results, as discussed in Section \ref{math results}, have considered only the subfamilies of the Gowdy, the polarized, the pseudo-homogeneous solutions, and the $B=0$ solutions, all of which are nongeneric in the class of all \(T^2\)-Symmetric Einstein flows on \(T^3\).
Furthermore, previous simulations that have aimed to simulate the full Einstein flow in these coordinates without any simplifying assumptions have inadvertently imposed the condition \(B=0\). From the form of the momentum constraint \eqref{pi constraint}, it is clear that an obvious way to solve it is to assume that one term in each of the three pairs of terms vanishes initially. The definition of $B$ makes it clear that this is equivalent to setting \(B=0\). It had not been realized prior to the discussion in \cite{MR4062458} that these simplifying assumptions, which imply $B=0$, restrict these simulations to the study of non-generic initial data, and consequently to simulations of non-generic subfamilies of the full set of $T^2$-Symmetric solutions.
\mnote{JIM?: DONE by BKB. Explain why \(A\) isn't meaningful}
\mnote{BEVERLY?: Note the actual ranges for \(A,B\) in the data}
This example shows us that many relatively simple approaches to solving the constraints are likely to generate non-generic initial data sets.

To avoid this restriction, we have generated initial data solutions of the constraints using a spectral method,
\mnote{BEVERLY: Reference for spectral method}
expanding all six of the functions appearing in Eq. \eqref{pi constraint} to some finite Fourier order.
We are then able to choose as free data some number of the Fourier coefficients so that Eq. \eqref{pi constraint} is a nondegenerate linear equation in the remaining coefficients for which we solve.
Finally, if needed, we shift \(\pi_\lambda\) vertically by a chosen amount so that it is strictly positive.

Some of the initial data sets we have obtained have resulted in evolutions which have poor numerical convergence.
As a consequence, we have manually eliminated these initial data sets and their simulated evolutions in making our predictions regarding the behavior of generic \(T^2\)-Symmetric solutions.
Subsequently, we have discerned that the solutions with poor numerical convergence can be distinguished from the solutions with good numerical convergence using the norm:
\begin{align}
    M(P,\pi_P,Q,\pi_Q,\lambda,\pi_\lambda) = \max_{\theta \in S^1} \max_{f \in \{P_\theta , \pi_P , Q_\theta , \pi_Q , \lambda_\theta , \pi_\lambda\}} f^2.
\end{align}
Initial data sets with large \(M\)-norm involve functions which (because of the way we generate random initial data) are likely to have one of the ``momentum terms'' much larger than the rest.
When evolving numerically, the errors introduced are of the magnitude of the largest momentum term, and are larger than would otherwise be the case in comparison to the other momentum terms.\footnote{The issue is the the same as arises in numerically solving, say, the equation \(\ddot f = - f\) with initial data \(f(0) = \gamma\) and \(\dot f(0) = \gamma^{10}\), where \(\gamma \gg 1\).
The errors are proportional to \(\gamma^{10}\).}
Since the $M$-norm effectively distinguishes ``good initial data sets'' from ``bad initial data sets'', we expect that, in any future numerical simulations, we would restrict our studies to initial data sets with all of the momenta terms of approximately the same order. We note that only a small fraction of the numerical evolutions of our initial data sets exhibit poor numerical behavior. We do not know if this reflects any underlying mathematical properties of the set of generic $T^2$-Symmetric spacetimes.

Fig. \ref{ABdata} shows the initial data for the models explored in this study in terms of the integrated constants $A$ (from \eqref{a def}) and $B$ (from \eqref{b def}). The value of $A$ is more or less unrestricted while $B$ lies in the range $[-1,1]$. Note that these constants are obtained after the procedure described here is followed.  Because $B \ne 0$ represents an instability in the Ringstr{\"o}m sense \cite{MR3186493}, we believe that a restriction on the range of $B$ does not imply a loss of generality. 

\begin{figure}
    \centering
    \includegraphics[scale=.5]{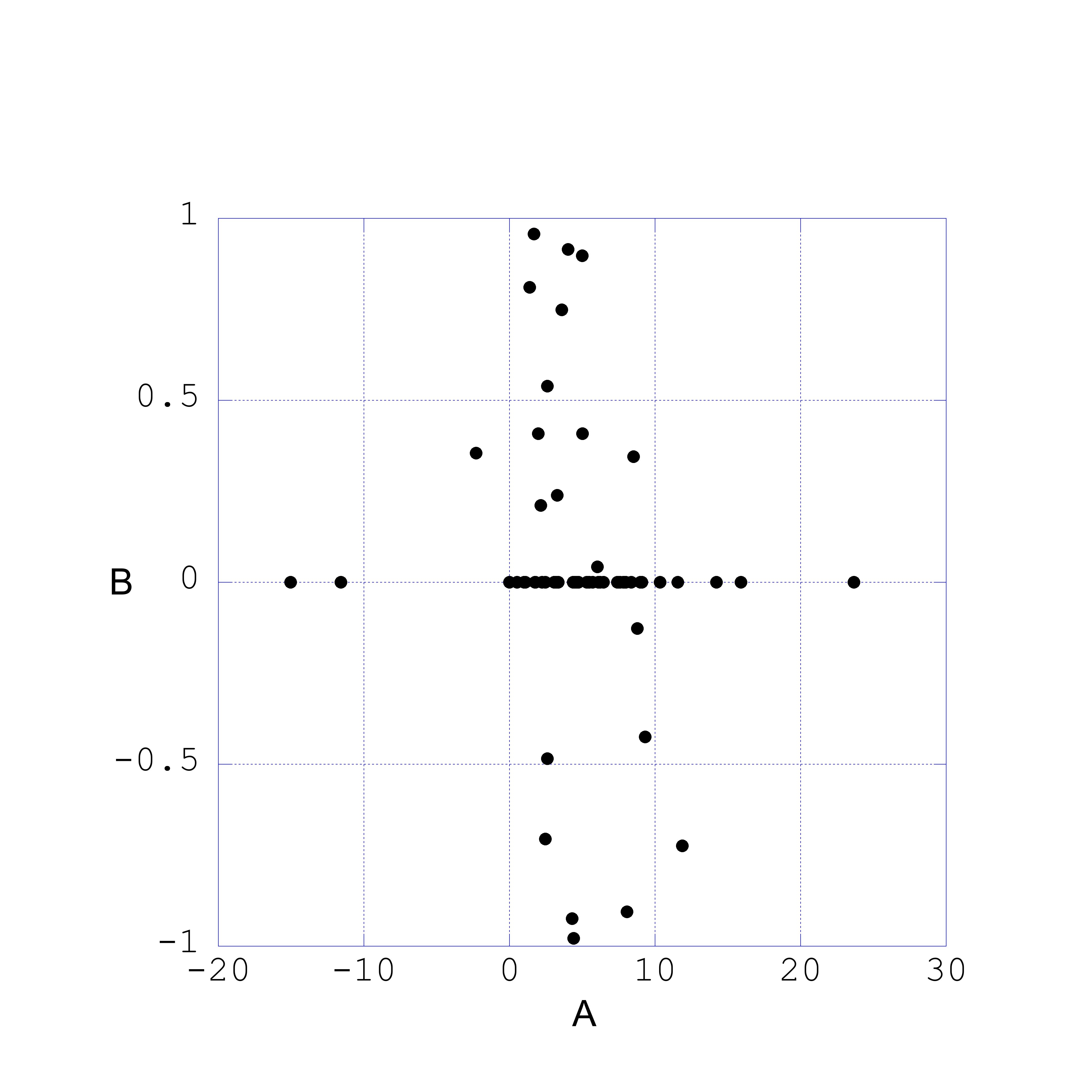}
    \caption{The values of the constants $A$ and $B$ for the initial data sets we have numerically constructed and then evolved are indicated here.}
    \label{ABdata}
\end{figure}




\subsection{	Numerical Methods.}

Numerical evolution, using Eqs. \eqref{v transport}--\eqref{rho-evol} of $T^2$-Symmetric spacetimes from initial data sets constructed as described above, has been carried out using original code implementing finite differences in the space dimension and iterated Crank--Nicholson (ICN) in the time direction \cite{2000PhRvD..61h7501T}.
The code is based on code previously used to simulate the contracting direction of \(T^2\)-Symmetric solutions \cite{MR1858721}.
This code has been reimplemented in a modern, functional language OCaml \cite{ocaml}, and certain improvements have been incorporated.\footnote{The code is available at the third author's github \url{https://github.com/fNBU/BIL-numerical-T2} where all the initial data and data for the time evolutions can also be found.}
Finite differences and ICN are relatively naive methods. However, the topology of the space domain and the fact that the equations are quasilinear prevent standard packages from being used.
Evolution has been done using variables from \cite{MR1858721}.
We believe that the simulations we have carried out are accurate because
\begin {enumerate*} [label=\itshape\roman*\upshape)]
\item the numerical methods used are known to converge in a wide variety of examples,
\item the momentum constraint \eqref{pi constraint}, which in exact solutions remains zero throughout the evolution, stays reasonably close to zero almost everywhere during the numerical simulation (see Figs. \ref{cavg}, \ref{cmax} , and \ref{3Dconstraint});
\item the integrated quantities $B$ and $A$, which in exact solutions remain constant throughout the evolution, stay reasonably close to constant during the evolution with the deviation decreasing as the spatial resolution increases (see Figs. \ref{Bfig} and \ref{Afig});
\item the solutions converge at the expected rate upon refinement of the spatial mesh, and
\item for the polarized and the $B=0$ subfamilies, the behavior of the numerical simulations agrees with that which has been proven mathematically.
\end {enumerate*} 
 Figures \ref{cavg} and \ref{cmax} illustrate the behavior of the momentum constraint $\cal {C} $ which is equal to the left-hand side of  \eqref{pi constraint} in examples of our numerical simulations.\footnote{The designation $\cal {C}$ has been chosen to avoid confusion with the Gowdy constant $C$ mentioned above.}
 
 Figure \ref{cavg} plots the evolution (up to the time $\tau = 20$, which essentially corresponds to the asymptotic expansion time) of the absolute value of the spatial average of $\cal {C}$ for spatial resolutions of $\theta=2\pi/n$ for $n= 512, \ 1024, \ 2048$, and $4096$. While the absolute value of this spatial average of $\cal {C}$ does grow in this example, we note that its value decreases for the improved resolutions, which gives us confidence that our simulations are reasonably accurate. Figure \ref{cmax} plots the evolution of the maximum value of the absolute value of $\cal {C}$, again showing that, for increased spatial resolutions, the values decrease, indicating accuracy for our numerical simulations. We note that in this work, the Hamiltonian constraint is treated as an evolution equation \eqref{l-evol} and thus is automatically satisfied.

Figures \ref{Bfig} and \ref{Afig} compare the behavior of the integrated quantities $B$ and $A$ in simulations of differing spatial resolution. Specifically in these figures, we illustrate the evolution of $\Delta B$ and $\Delta A$ – the differences of these quantities from their initial values for representative simulations. As noted above, for exact solutions, $B$ and $A$ are constants. As with $\cal {C}$ in Figs. \ref{cavg} and \ref{cmax}, in these figures we plot the evolution of $\Delta B$ and $\Delta A$ for a succession of increasingly fine spatial resolutions. We see in these figures that the evolution of $\Delta B$ becomes
closer and closer to zero as the spatial mesh becomes finer and finer. This does not occur for $\Delta A$; indeed, we see that after $\tau= 15$, this quantity behaves worse for the highest spatial resolution than it does for the next highest spatial resolution. (This misbehavior occurs in both the $B=0$ and the $B\ne 0$ cases.) 
Careful examination of the behavior of the individual terms in $A$, shows that they separately become several orders of magnitude larger than the actual value of $A$ so that the required subtraction becomes delicate. Furthermore, as $\tau$ increases, the Courant condition for these wave equations causes the timestep to become very small, magnifying the problem. This behavior is characteristic of stiff PDEs and may possibly be cured by rewriting the differenced equations. However, we note that the actual value of $A$ is a derived quantity that is not required for any computation. We therefore assume that constraint convergence validates the accuracy of the numerics and of the asymptotics we obtain therefrom for the generic case. We emphasize that asymptotics which we obtain here numerically are consistent with the mathematically proven asymptotics for the $B=0$ subfamily.

 \begin{figure}
     \centering
     \includegraphics[scale=.35]{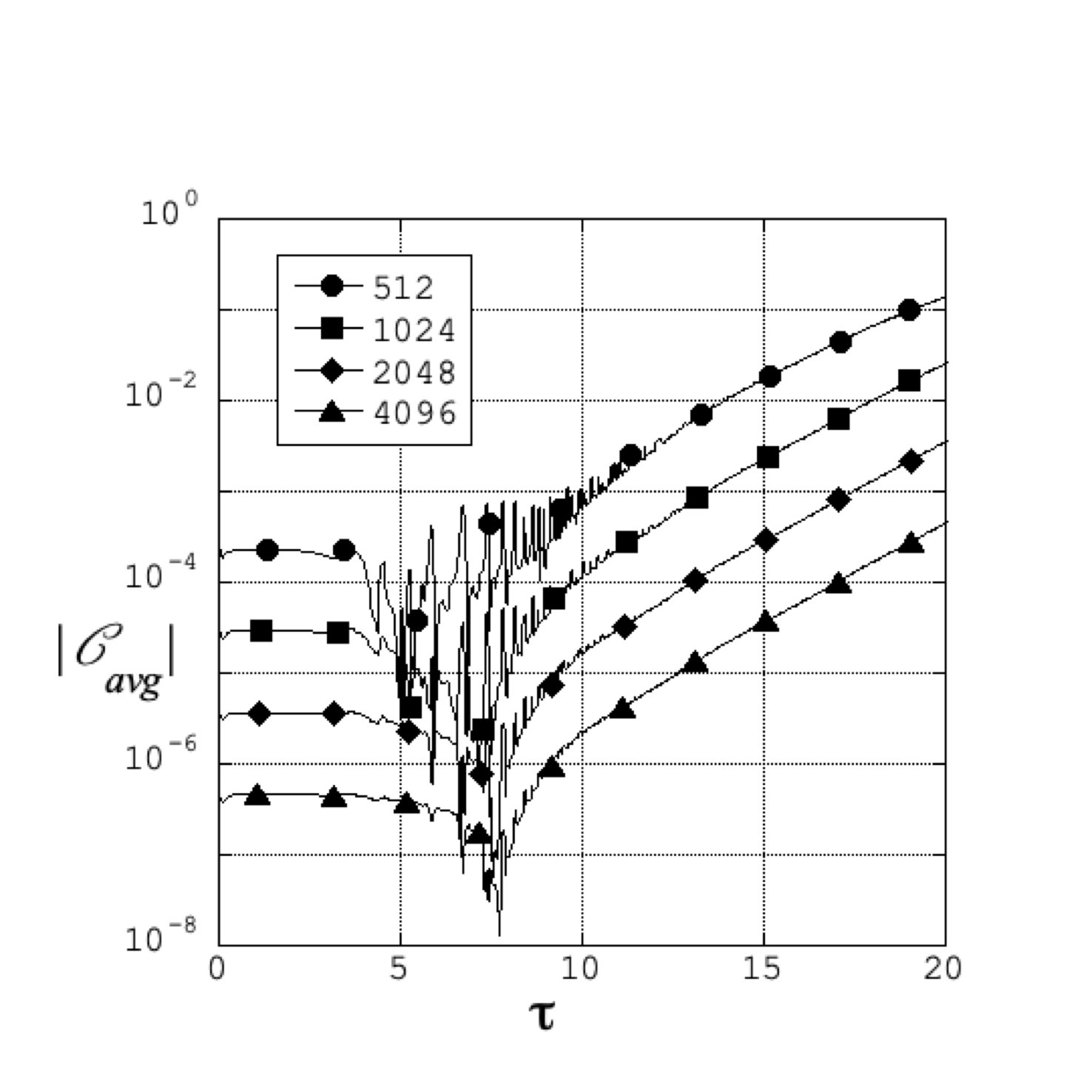}
     \caption{Convergence of the spatial average of the momentum constraint, $\cal {C}$, with finer spatial resolution. 
     The absolute value is used to allow a logarithmic vertical scale. The results are shown for a representative numerical simulation over the time interval $\tau \in [0,20]$, with the spatial resolution $\theta = 2 \pi/n$ for $n = 512, 1024, 2048, 4096$. We see that as the spatial resolution becomes finer, the evolution of this quantity becomes closer to zero, the value expected for an exact solution. }
     \label{cavg}
 \end{figure}
  \begin{figure}
     \centering
     \includegraphics[scale=.35]{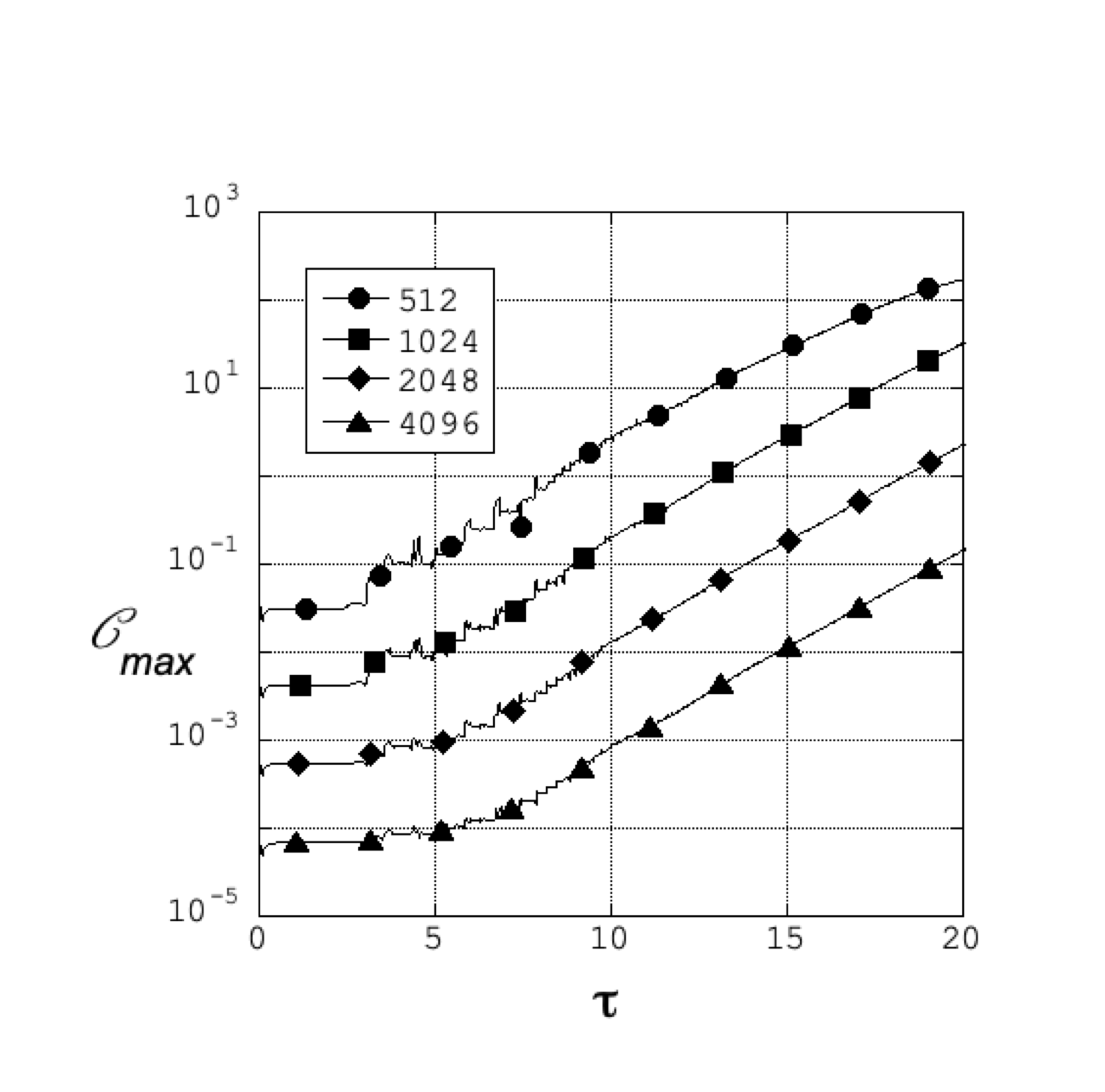}
     \caption{The same as Fig. \ref{cavg} but for the maximum of the absolute value of the momentum constraint.}
     \label{cmax}
 \end{figure}

  \begin{figure}
     \centering
     \includegraphics[scale=.35]{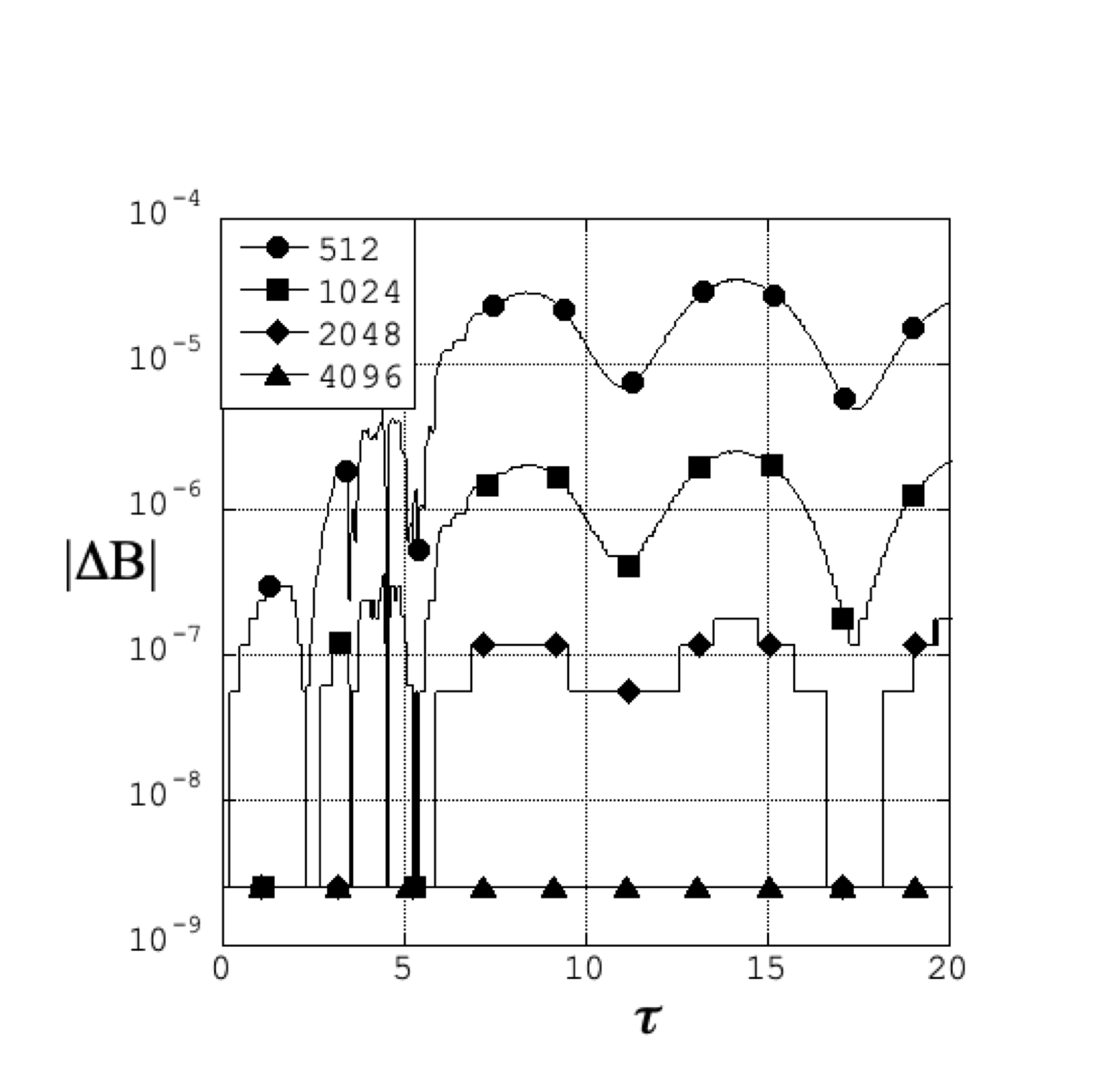}
     \caption{We use $\Delta B (\tau)$ to designate the difference of $B$ from its initial value for a representative numerical simulation over the time interval $\tau \in [0,20]$, with the spatial resolution $\theta=2\pi/n$ for $n$ the number of spatial points. As $n$ increases, $\Delta B$  converges to  zero, which would be the case for an exact solution.
}
     \label{Bfig}
 \end{figure}
 
  \begin{figure}
     \centering
     \includegraphics[scale=.35]{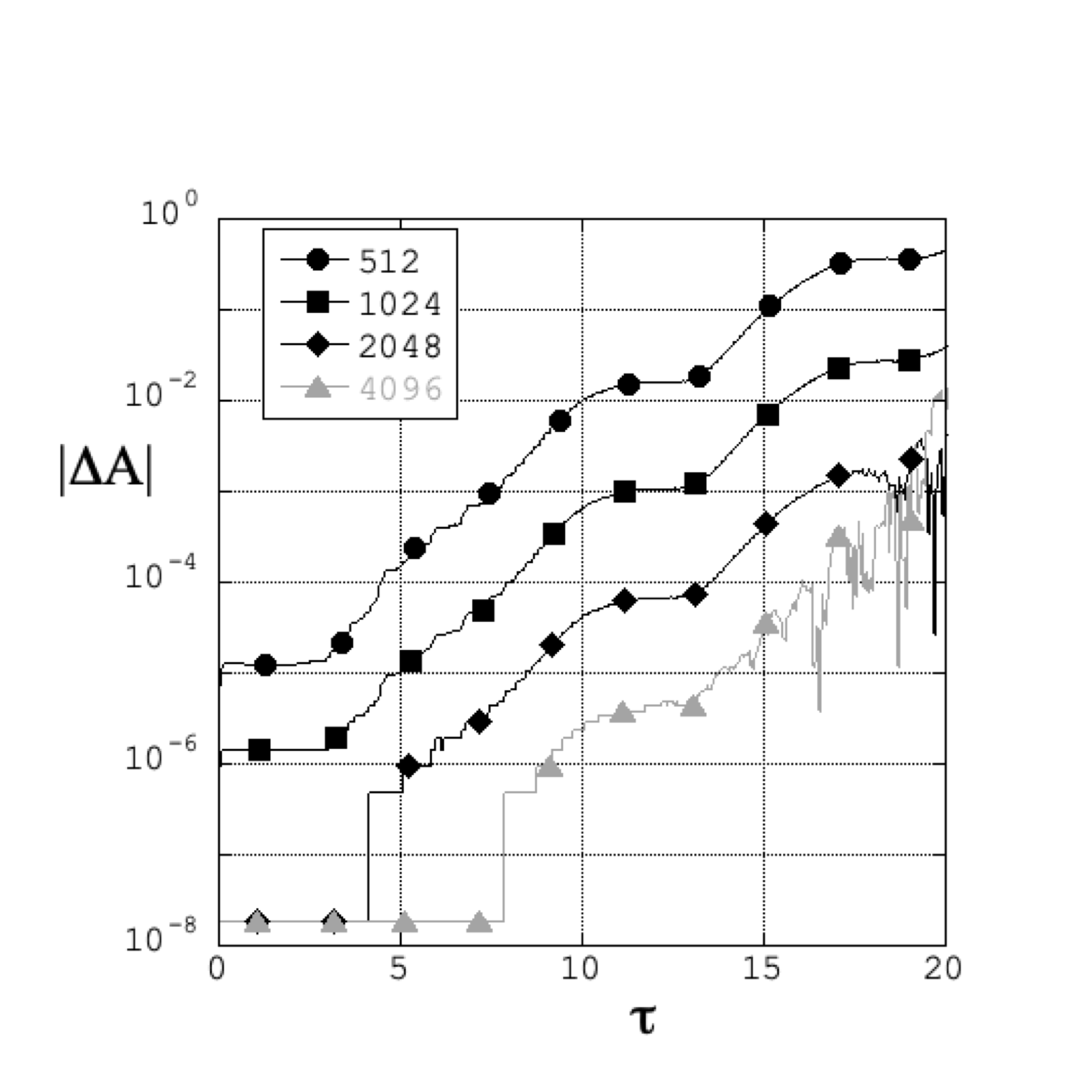}
     \caption{We use $\Delta A$ to designate the difference of $A$ from its initial value for a representative numerical simulation over the time interval $\tau \in [0,20]$ with the spatial resolution $\theta=2\pi/n$. For $\tau < 12$, convergence as $n$ increases is comparable to that shown for the momentum constraint. However, the convergence breaks down toward the end of the simulations and is worse for the largest values of $n$. (For this reason, the curve for that resolution is shown in gray.) A possible explanation is discussed in the text. }
     \label{Afig}
 \end{figure}
 
 \begin{figure}
    \centering
    \includegraphics[width=\textwidth]{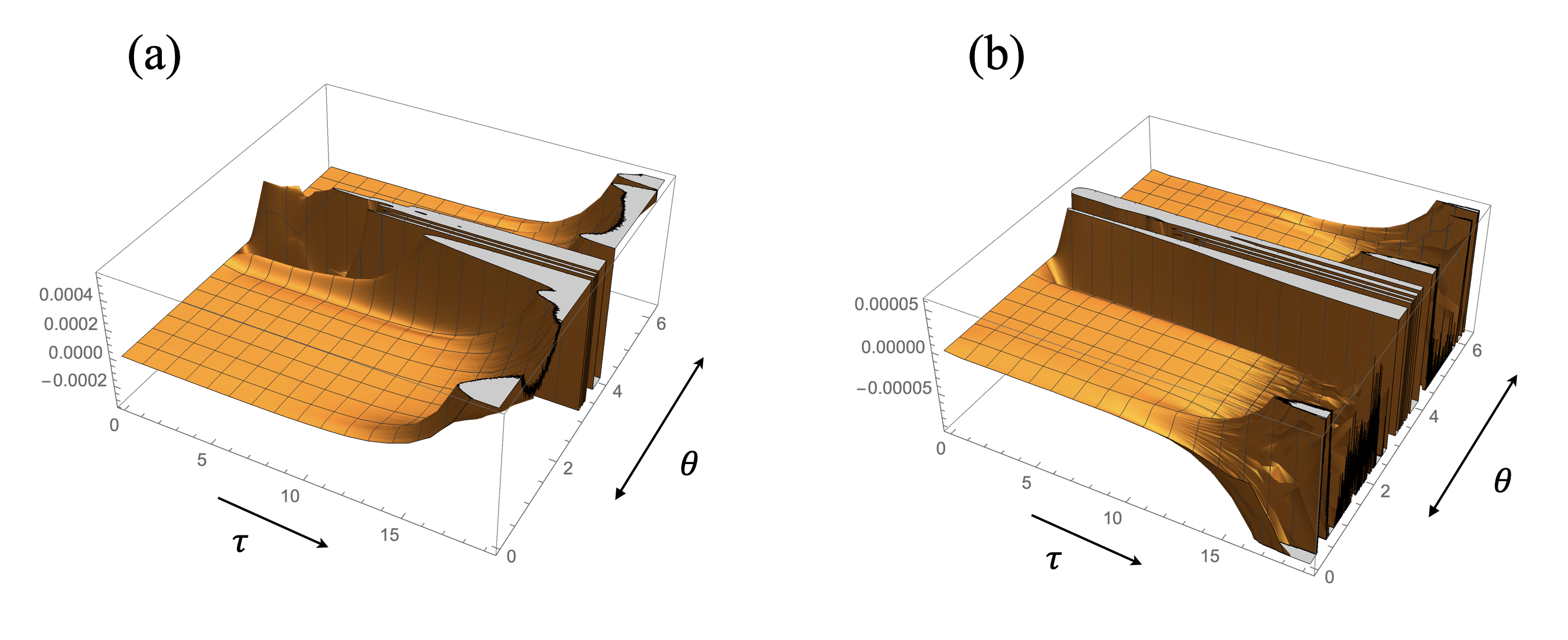}
    \caption{An example \cite{Mathematica} of the evaluation of the momentum constraint \eqref{c def} over the entire computational spacetime $\{\theta,\tau\}$. (a) The $B=0$ case. (b) The $B \ne 0$ case. Both cases have a resolution of 2048 spatial points.}
    \label{3Dconstraint}
\end{figure}

\section{Results of Simulations and Implied Asymptotic Behavior \label{impliedasymptotic}}

\subsection{Dominant Asymptotic Behavior of Metric Coefficients and Their Conjugate Momenta }


In this subsection, we discuss the asymptotic behavior as $\tau \to \infty$ that our numerical simulations imply for the metric coefficients and their conjugate momenta for a variety of solutions, both of subfamilies as well as for the generic case. 
In carrying out the numerical simulations of the evolutions of these spacetimes, we focus our attention on increasing values of $\tau$, which is the expanding direction. From these simulations, we infer the asymptotic behavior for very large values of $\tau$.

Table \ref{tab} lists the asymptotic behavior which our simulations indicate for three classes of $T^2$-Symmetric solutions: the generic solutions, the $B=0$ solutions, and the polarized solutions. These behaviors, expressed in terms of expansions in $\tau$,  are consistent with the numerics, but have not been proved except in special cases.
Some of proofs of the estimates in the \(Q \equiv 0\) (polarized) column of Table \ref{tab} require a weaker form of the estimates -- e.g., imposing a smallness condition -- to carry out the proofs. (See \cite{MR3513138}.)
Some of the estimates in the \(B=0\) column have been proven in \cite{MR4062458} although again, some of the proofs rely on a weaker form of the estimates.
To our knowledge, none of the estimates in the \(B\ne 0\) column have been proved.
\begin{table}
    \centering
    \setlength{\tabcolsep}{12pt}
    \renewcommand{\arraystretch}{1.5}
    \begin{tabular}{r||l|l|l|r}
        &	\multicolumn{1}{|c|}{\(B \ne 0\)}													&	\multicolumn{1}{||c|}{\(B = 0\)}	&	\multicolumn{1}{||c|}{\(Q \equiv 0\)}	&					\\\hline\hline
        \(\rho\)								=&	 \multicolumn{3}{c|}{\(\frac{1}{2} \tau + \rho_\infty +  O(e^{-\tau/4})\)}		& \dag \ddag 	\\ \hline
        \(\rho_\theta\)					    	=&	 \multicolumn{3}{c|}{\(O(1)\)} 														&	\dag \ddag	\\\hline
        \(\rho_\tau\)							=&	 \multicolumn{3}{c|}{\(\frac{1}{2} + O(e^{-\tau/4})\)}							&	\dag \ddag	\\\hline
        \(l		\)							=&	\( \ln \frac{1}{2} + O(e^{-\tau/4})			\)									&	 \multicolumn{2}{||c|}{\(\ln \frac{1}{2} + O(e^{-\tau/8})\)}			&					\\\hline
        \(l_\theta\)							=&	 \multicolumn{3}{c|}{\(O(e^{-\tau/2})\)}											&	\dag \ddag	\\\hline
        \(l_\tau\)								=&	 \multicolumn{3}{c|}{\(O(1)\)}														&	\dag \ddag	\\\hline
        \(V\)									=&	 \(\frac{1}{2} \tau+ V_\infty  +  O(e^{-\tau/4})\)									&	 \multicolumn{2}{||c|}{\(V_\infty  + O(e^{-\tau/2})\)}					&	\dag \ddag		\\\hline
        \(V_\theta\)							=&	 \multicolumn{3}{c|}{\(O(e^{-\tau/2})\)}											&	\dag \ddag	\\\hline
        \(V_\tau\)								=&	 \multicolumn{3}{c|}{\(O(1)\)}														&	\dag \ddag	\\\hline
        \(e^{-\tau/2} Q\)						=&	 \({\cal{Q}}_\infty + O(e^{-\tau/4} )\)														&	 \multicolumn{1}{||l||}{\(O(1)\)}											&	 \(0\)				\\\hline
        \(\partial_\theta(e^{-\tau/2}Q)\)	=&	 \(O(e^{-\tau/2}	)\)																	&	 \multicolumn{1}{||l||}{\(O(1)\)	}											&	 \(0\)				\\\hline
        \(\partial_\tau(e^{-\tau/2}Q)\)		=&	 \(O(1)\)																				&	 \multicolumn{1}{||l||}{\(O(e^{\tau/2})\)}									&	 0	
    \end{tabular}
    \caption{Asymptotic values of the metric components as \(\tau \to \infty\) for polarized, \(B=0\), and generic ($B \ne 0$) solutions.
    All of the expansions for the class \(B \ne 0\) are conjectural results based on our simulations.
    The results labeled with a {\ddag} have been proven in \cite{MR4062458} for sufficiently small \(B=0\) solutions.
    The results labeled with a {\dag} have been proven in \cite{MR3513138} for sufficiently small polarized solutions.
    These asymptotic values are observed in our simulations without regard to a smallness condition or other, similar restriction, and are conjectured to hold generically for that class. Separate asymptotics of $l$ and $Q$ have not been proven. }\label{tab}
\end{table}

We note that smallness criteria on the estimates are imposed in Refs. \cite{MR3513138} and \cite{MR4062458} to allow the respective analyses to proceed by allowing small terms to be dropped as needed. In effect, the limit is placed on initial data to ensure that deviations from the proven asymptotic behavior are never too large. See \cite{MR3513138} and \cite{MR4062458} for the specific criteria used.

The smallness criteria needed for proofs in \cite{MR3513138} and \cite{MR4062458} are quite different from the limitations of our numerical methods which fail for initial data with momenta that are not all approximately the same magnitude. Experience has shown that failed simulations violate this restriction. We believe this issue is solely numerical, resulting if the differenced evolution equations become stiff. Fixing the problem will be left to future work. Although we cannot prove it, we believe that the failed simulations do not indicate yet another class of previously missed solutions.

Note that the numerically inferred expansions of $V$ and $e^{-\tau/2} Q$ in Table \ref{tab} and the bounds on their space derivatives indicate that \( V_\infty, Q_\infty\) are actual spatial constants.
This is not the case for \(\rho_\infty\); \(\rho\) generically converges to a non-constant function of $\theta$.

Let us demonstrate how we conclude that these estimates are optimal.\footnote{The presence of the factor $e^{-\tau/2}$ in the expressions given for the asymptotics of $Q$ is needed because $Q$ does not asymptotically decay (in contrast, e.g., to $V$) but rather grows as $e^{\tau/2}$. This behavior can be seen in Fig. \ref{proof}.}
We select from the many solutions we have simulated six (three each of the \(B=0\) and \(B\ne 0\) cases) which have the best numerical convergence.
For each of these six simulations, we generate a plot of, for example, the scalar function of time
\begin{align}
    X_R :=  \max_{S^1} e^{R \tau}\left| \rho - \rho_{\widetilde \infty} - \frac{\tau}{2} \right| \label{bounding example}
\end{align}
for various values of the constant \(R\) until we find a value of $R$ which results in $X_R$ being \(O(1)\) but non-decaying as \(\tau \to \infty\).
\begin{rem}\label{inftytilde}
    In the simulations, we cannot actually set \(\left. \rho \right|_{\tau=\infty}\) equal to \(\rho_\infty\), the mathematical limit, since our simulations do not extend to \(\tau = \infty\).
    Instead, we use the value of \(\rho\) at a fixed time \( \tau = \widetilde \infty=20\) near the end of the simulation.
    This is well within the asymptotic regime.
    We use the notation \(f_{\widetilde \infty} := \left. f \right|_{\tau = \widetilde \infty} \).
\end{rem}

\begin{figure}
    \centering
    \includegraphics[scale=.3]{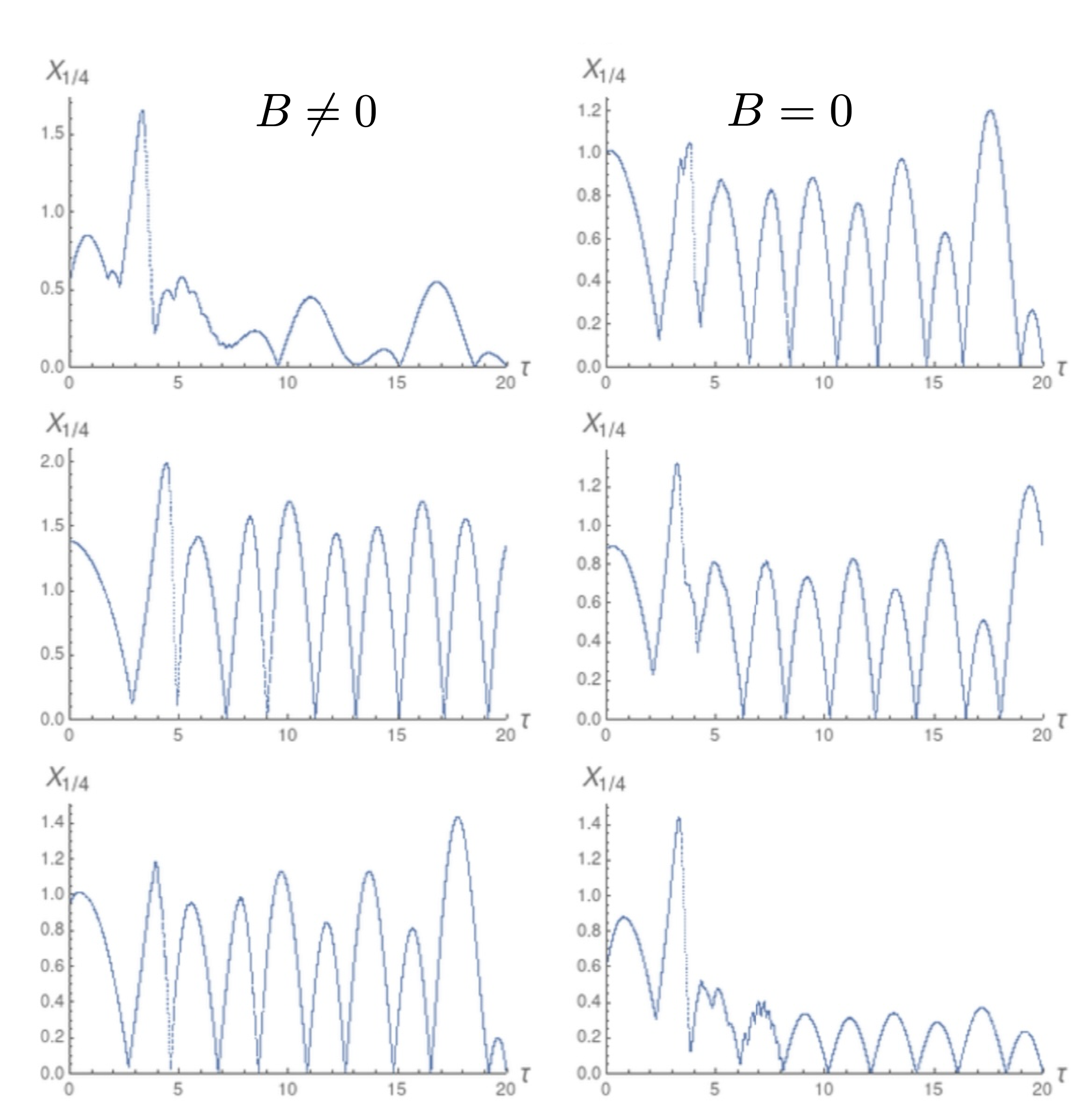}
    \caption{Plots of the time evolution of the function $X_{1/4} (\tau)$ for numerical simulations of three different solutions for the generic $T^2$-Symmetric class as well as three different solutions in the $B=0$ subfamily.  $X_R$ is defined in \eqref{bounding example}. The different subclasses are shown horizontally and the different simulations vertically. }\label{rhoplts}
\end{figure}

In Fig. \ref{rhoplts} we present plots that generate the
bounding value for the quantity \(X_R\) as defined in equation \eqref{bounding example}. The objective of these plots is to indicate that all models for representative simulations approach a constant envelope as $\tau \to 20$ indicating (i) that the asymptotic regime is likely to have been reached and (ii) that $X_R = X_{1/4}$ demonstrates the correct power law. One might think that a different value of \(R\) would give a better bound.
In Fig. \ref{rhopltscomparison}, overlays of $X_R$ for $R = 1/8,\ 1/4,\ 1/2$ are shown for both the $B=0$ and $B \ne 0$ classes of models to demonstrate that only the $R = 1/4$ value is likely to lead to convergence as $\tau \to \infty$. This application of bounding power laws allows inference of the asymptotic behavior from numerical simulations.

\begin{figure}
    \centering
    \includegraphics[width=\textwidth]{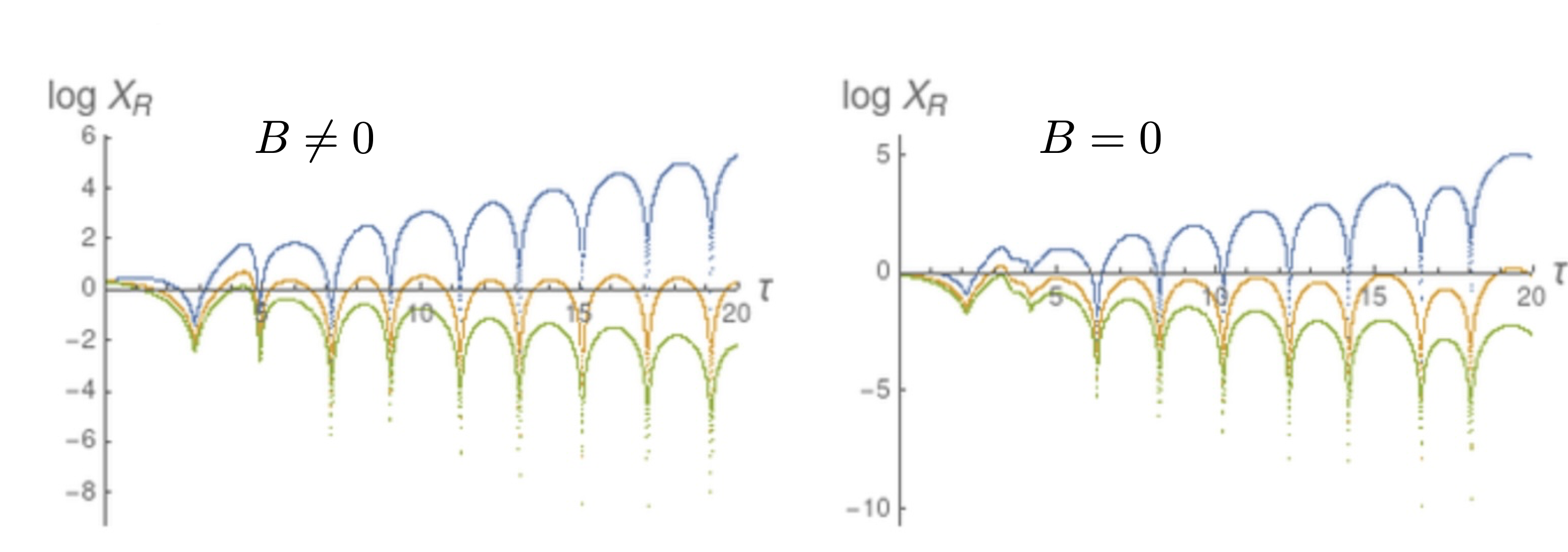}
    \caption{Plots of the evolution of the functions \(\ln X_R\) for different values of \(R\) for the two classes of solutions. \(R=1/2\) is in blue, \(R = 1/4\) is in orange, and \(R = 1/8\) is in green. These plots demonstrate that, with $R= 1/4$ for both classes, $X_R$ neither grows nor decays as it does for $R= 1/2$ or $R= 1/8$ respectively.}\label{rhopltscomparison}
\end{figure}

To see that \(V\) has a different rate of expansion in the \(B=0\) and \(B\ne 0\) cases, assume that \(V = V_\infty + O(e^{-\tau/2})\) and plot 
\begin{align}
    Z_R := \max_{S^1} e^{ R\tau}\left| V - V_{\widetilde \infty}  \right|
    \label{Ydef}
\end{align}
with \(R = 1/2\).
This plot is presented in Figure \ref{vplt1}.
\mnote{?: High frequency damping in \(V\) plots}
\mnote{?: say something about how the high frequency oscillations differ in the \(B=0\) and \(B\ne 0\) cases}
\mnote{JIM: Also, I think it would be helpful to comment further on the distinctions between these graphs for the \(B\neq 0\), the \(B=0\), and the polarized cases.}
\mnote{?: say something about the large scale bounce in \(V\) in the \(B \ne 0 \) case.}
\begin{figure}
    \centering
    \includegraphics[scale=.25]{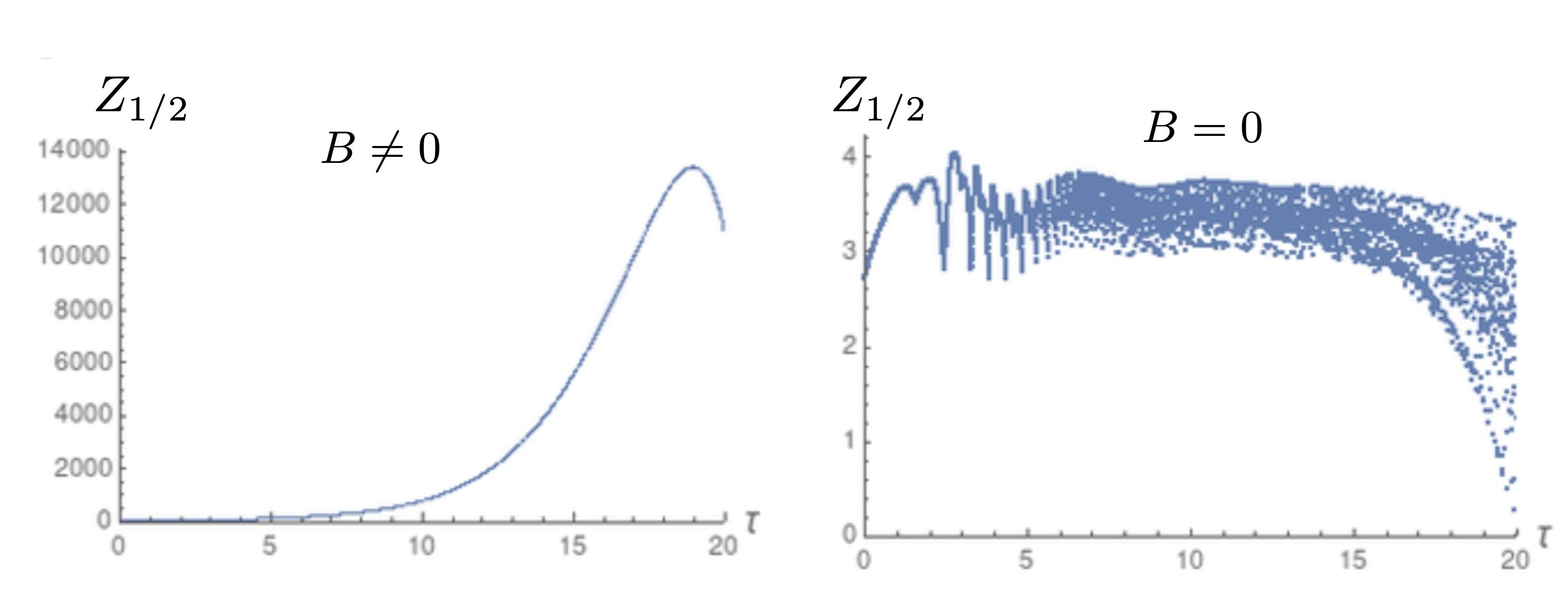}
    \caption{Plots of the evolution of the function \(Z_{1/2}\) for numerical simulations of a solution for the two classes of solutions. Notice that \(Z_{1/2}\) is \(O(1)\) only if \(B = 0\).}\label{vplt1}
\end{figure}

\begin{rem}
It may seem odd that all the plots in Figure \ref{vplt1} tend to zero at the end of the simulation.
This is an artifact of the fact that our simulation does not extend to \(\tau =\infty\).
Thus, by construction, \(\left| V - V_{\widetilde \infty}  \right| \to 0\) at the end of the simulation.
See Remark \ref{inftytilde}.
\end{rem}

\begin{rem}
    It may also seem odd that the plots in Figure \ref{vplt1} are ``fuzzy''.
    This feature is related to the fact that the natural volume form in the \(\theta\)-direction is \(e^{\rho - \tau}\, d\theta\).
    Mean values of the form \(\int_{S^1} f \, d\theta\) and sup norms can have highly oscillating behavior because these are not physical quantities.
    Rather, one should compute \(\int_{S^1} f \, e^{\rho - \tau}\, d\theta\) to get a smooth, physical quantity.
    Nonetheless sup norms are useful for establishing low-order behavior as \(\tau \to \infty\). It should also be kept in mind that the ``fuzziness'' occurs on a scale of order unity or smaller. If the vertical scale we use in the figure were one or two orders of magnitude larger, the fuzziness, while still present, would be invisible in the figure.
 \end{rem}

Notice that in Fig. \ref{vplt1}, the plot for the $B=0$ solution is {\cal{O}}(1), but the plot for the $B\ne 0$ solution diverges exponentially.
If we instead assume that the bound is of the form \(V = V_\infty + \frac{\tau}{2} + O(e^{-R \tau})\) and plot \(\widehat Z_R := \max_{S^1} e^{ R\tau}\left| V - V_{\widetilde \infty} - \frac{\tau}{2}  \right|\) with \(R = 1/4\), we obtain the plots in Figure \ref{vplt2}.
\begin{figure}
    \centering
    \includegraphics[scale=.25]{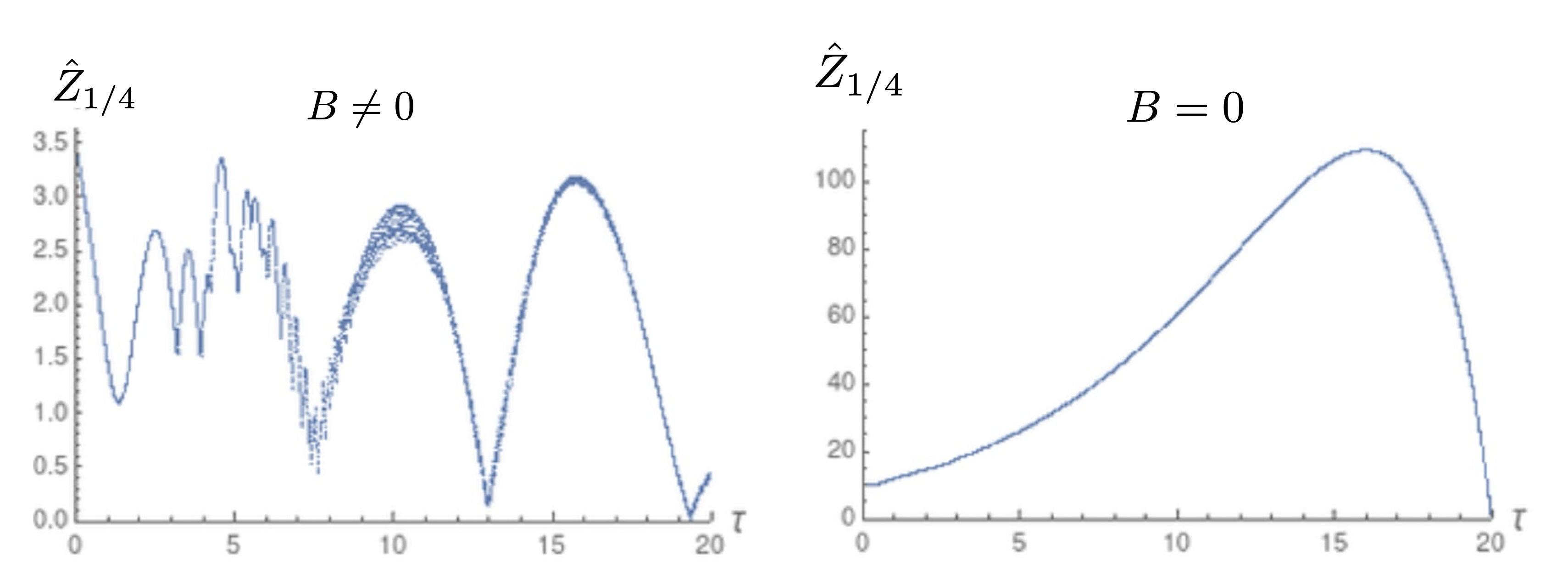}
    \caption{Plots of the evolution of the function \(\widehat Z_{1/4}\) for numerical simulations of a solution of the two classes of solutions. Notice that \(\widehat Z_{1/4}\) is \(O(1)\) only in the \(B\ne 0\) case. This means that this is the only case where the presumed power law $R = 1/4$ is correct for the generic case but incorrect for the $B=0$ case. As mentioned above, the differing vertical scales allow oscillations to be visible if the scale is order unity but invisible for scales extending to $100$ or greater. }\label{vplt2}
\end{figure}
Notice that the the generic example in Fig. \ref{vplt1} and the $B=0$ case in Fig. \ref{vplt2} appear to grow exponentially and then fall to zero. As discussed following Eq. \eqref{Ydef}, this is an artifact due to the simulation not extending in time to $\infty$.




\subsection{An ODE Approach for Determining the Asymptotic Behavior of the Means of the Metric Coefficients \label{future}}
Given that \(V\) and \(e^{-\tau/2}Q\) converge to spatial constants, it is natural to form a system of ordinary differential equations (modulo error terms) which govern the evolution of the spatial means of these quantities. 
This is the technique used for proving certain results in \cite{MR3513138} for polarised \(T^2\)-Symmetric solutions and in our paper \cite{MR4062458} for the \(B = 0\) \(T^2\)-Symmetric solutions which are not necessarily polarised.

We can see from the numerics that in the general (\(B\ne 0\)) case it is \(|V - V_\infty - \frac{\tau}{2}|\) that decays, not \(|V - V_\infty|\).
It then seems natural to study the quantity \(|V - \frac{\tau}{2}|\) instead.
We have not yet been able to determine the future behavior of \(V\) in the \(B\ne0\) case, partially due to the weaker decay of this quantity.
Let us describe how an argument determining this future behavior might be expected to proceed.

We expect such a proof, if it were to follow the basic idea of the approach used in \cite{MR4062458}, to proceed in four steps as follows:
\begin{enumerate}
    \item 
    Denoting the right side of Eq. \eqref{l-evol} by \(J\), one would like to form an ordinary differential equation system involving the spatial averages:
    \begin{align} \label{pi def}
        \Pi:= \int_{S^1} e^\rho \, d\theta , \quad E:= \int_{S^1}e^{\rho -2\tau} J \, d\theta , \quad \text{and} \quad Y := \int_{S^1}  e^{l + \rho + 2\tau} \, d\theta.
    \end{align}
Note that in a sense \(E\) acts as a weighted Sobolev norm for the functions \(V,Q\).
    Unfortunately, \(E\) does not satisfy a good differential inequality for use with Gr\"onwall's inequality.
    It is thus necessary to modify \(E\) by a correction term \(\Lambda\), and to normalise it by a factor of \(\Pi\):
    \begin{align} \label{h def}
        H:= \Pi ( E + \Lambda).
    \end{align}
    We use $\Lambda$ to denote generic correction terms. The explicit form of $\Lambda$ may differ from case to case. For example, in the specific case discussed below, Eq. \eqref{cor def} gives the explicit form of $\Lambda$. The introduction of the normalisation introduces error terms which one hopes to bound by assumption.
    \item
    One may then form an ordinary differential equation system for $\Pi$, $H$, and $Y$, with error terms, make the ansatz that the error terms are small, and find fixed points for the resulting ODE system.
    \item
    To show that \( \Pi , H , Y\) actually converge to solutions of the ODE, we perform a bootstrap argument.
    If \( \Pi , H , Y\) are close enough to the values of the sink in the ODE, one hopes to show that the errors introduced by the correction and by the ODE are small, showing that \( \Pi , H , Y\) flow toward these sink values for the ODE system.
    \item
    One then recovers bounds on \(E\), which allows us to then prove estimates on \(V,Q\).
\end{enumerate}

We hope to use what we have learned from carrying out numerical simulations for $B \ne 0$ solutions to assess which portions of the proof for \(B=0\) solutions might be adapted for the more general case. 

In order to extend these or alternative methods of proof to the $B \ne 0$ case, it is important to understand at what point in, e.g., \cite{MR4062458}, setting $B = 0$ becomes necessary for further analysis. As described above, the method of proof in \cite{MR4062458} relies on developing ``correction terms’’ to convert one or more expressions to an ``energy.’’ If the correction terms can be bounded in the regime of interest, the behavior of energy terms in the same regime, e.g. asymptotically in time, may then be controlled. This is a standard method of proof.  In \cite{MR4062458}, we arrive at Eq. (3.4) in that paper as a consequence of the development of a necessary correction term which we rewrite here (using $\langle Q \rangle$, etc., to denote the spatial average of $Q$, etc.):
\begin{equation}
\langle e^\rho V_\tau \rangle = A + B \langle Q \rangle  + \int_{S^1}\, e^{2(V-\tau)} Q_\tau (Q - \langle Q\rangle ) e^\rho \, d\theta.
\label{frompaper1}
\end{equation}

Previous to this point in \cite{MR4062458}, it is shown that the last term on the right hand side of \eqref{frompaper1} is bounded by a constant whether or not $B = 0$. However, $B \langle Q \rangle $, a term arising from the introduction of the correction term needed for the subsequent proof, cannot be so bounded. In fact, numerical studies (as shown in Fig. \ref{proof}) show that the term grows as $e^{a \tau}$ and thus comes to dominate the asymptotics of $\langle e^\rho V_\tau \rangle$. Setting $B = 0$ (i.e., restricting to that subfamily), yields that $\langle e^\rho V_\tau \rangle$ is bounded by a constant allowing the proof to proceed. It may be feasible to develop proofs for the generic case where one expects asymptotically that $\langle e^\rho V_\tau \rangle \approx B\langle Q\rangle$ via a new model ODE framework to analyze the asymptotic dynamics.\footnote{The model ODE system used in \cite{MR4062458} may be found implicitly in Section 4 of that paper.}
\begin{figure}
    \centering
    \includegraphics[width=\textwidth]{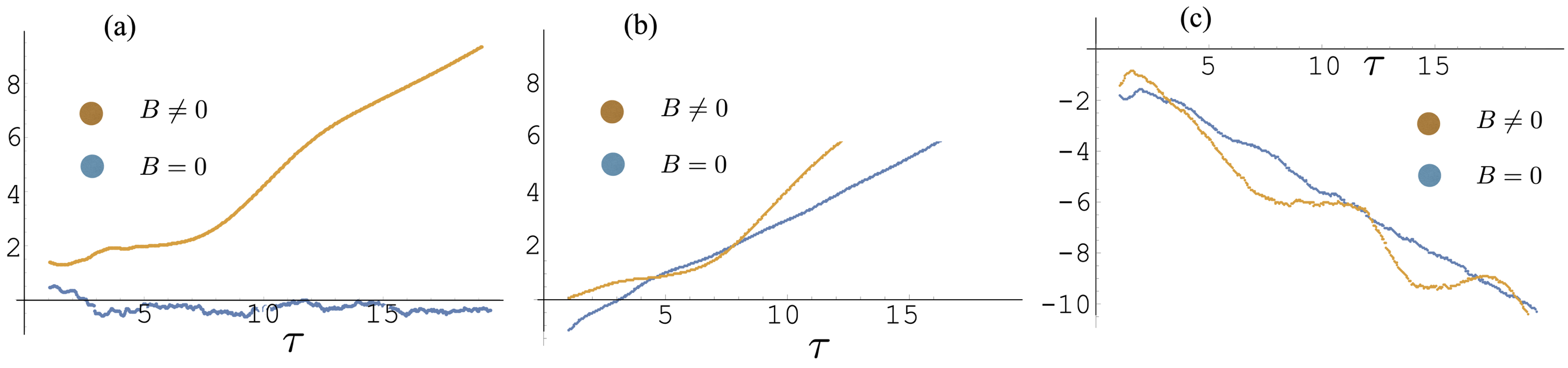}
    \caption{The individual terms in \eqref{frompaper1} for a representative simulation for the $B=0$ (blue) and $B \ne 0$ (orange) classes. (a) The entire right hand side. The term $\langle e^\rho V_\tau \rangle$ is very close to constant (taking on the value of $A$) for the $B=0$ case, while it shows growth for the $B\ne 0$ simulation. (b) The term $\langle Q\rangle$ is displayed and grows for both cases. The growth of $\langle e^\rho V_\tau \rangle$ is diminished only if $B=0$. (c) The remaining term on the right hand side of \eqref{frompaper1} decays in both cases and thus does not contribute to the failure of the entire term to be bounded if $B \ne 0$. }\label{proof}
\end{figure}

We note that the most useful form of the ODE system for this analysis involves the quantities
\mnote{BEVERLY?: Give the normalized \(\Pi, Y\) the names they have in the paper}
\begin{align}
    \frac{\Pi}{e^\tau \sqrt{H}}  \quad \text{and}\quad \frac{Y}{e^{3\tau} \sqrt{H}} \ .
\end{align}

These quantities are defined in Eqs. \eqref{pi def} and \eqref{h def} with Eq. \eqref{cor def} taken as the definition of $\Lambda$. We emphasize that the correction term $\Lambda$ takes on the value that is needed for a particular analysis. In this paper, we use 
\begin{align}\label{cor def}
    \Lambda :=& \frac{1}{2} e^{-2\tau} \int_{S^1} V_\tau  \left( V - \langle V \rangle - 1\right)e^\rho \, d\theta
\end{align}
as the necessary term.
\begin{figure}
    \centering
    \includegraphics[scale=.25]{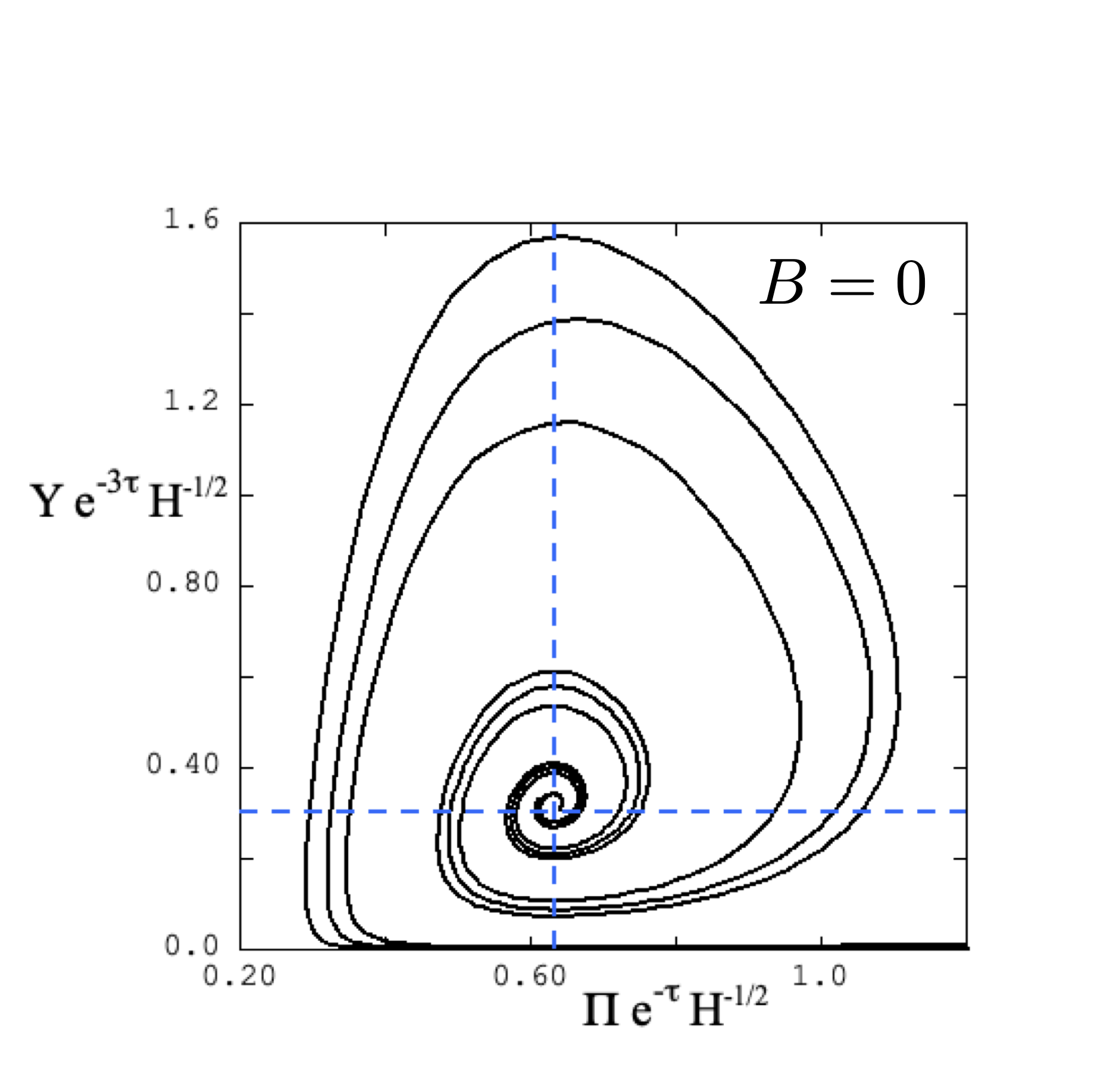}
    \caption{Plot of trajectories for three different $B=0$ simulations in the configuration space of \(Y e^{-3\tau} H^{-1/2}\) vs \(\Pi e^{-\tau} H^{-1/2}\) for three simulations in the $B=0$ class. The trajectories enter the displayed region from the lower right hand corner and spiral into an attractor indicated by the blue, dashed lines. The value of the attractor is $\left ( \frac{2}{\sqrt{10}},\frac{1}{\sqrt{10}} \right) \approx (0.63,\,0.31)$ where the first number is the component along the horizontal axis. The variables in this plot are obtained from Eqs. \eqref{pi def}, \eqref{h def}, and \eqref{cor def}.}\label{b0sink}
\end{figure}

\begin{figure}
    \centering
    \includegraphics[scale=.3]{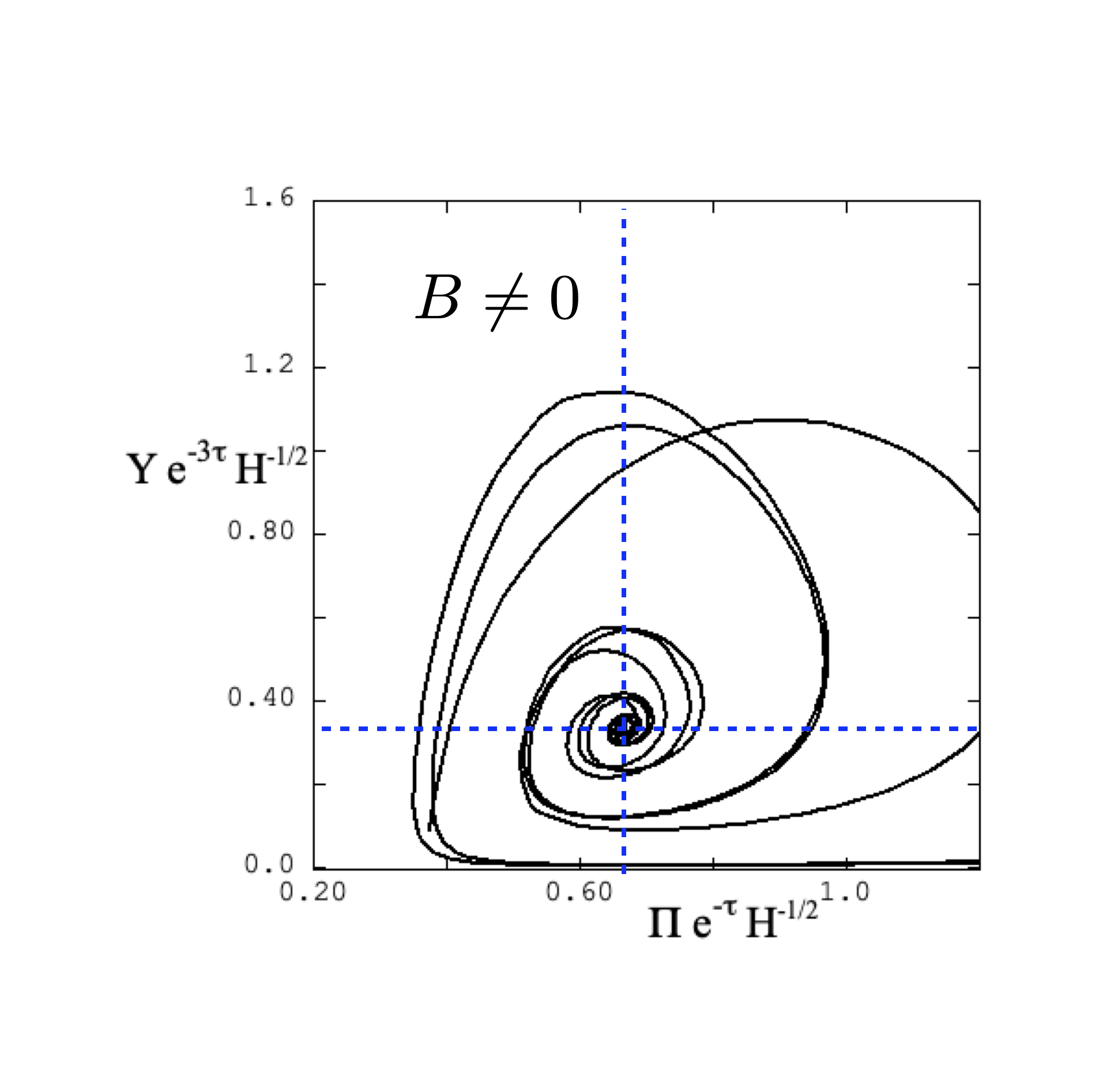}
    \caption{Plot of trajectories for three different $B\ne 0$ simulations in the configuration space of \(Y e^{-3\tau} H^{-1/2}\) vs \(\Pi e^{-\tau} H^{-1/2}\) for three simulations in the $B \ne 0$ class. The trajectories enter the displayed region from the lower right hand corner and spiral into an attractor indicated by the blue, dashed lines. The approximate value of the attractor is $(0.67,\,0.33)$ where the first number is component along the horizontal axis. The variables in this plot are obtained from Eqs. \eqref{pi def}, \eqref{h def}, and \eqref{cor def}.}\label{gensink}
\end{figure}
In Figs. \ref{b0sink} and \ref{gensink}, we demonstrate the existence of an attractor (in the sense meant in this paper) for the evolution in the configuration space defined by the spatially averaged functions \(Y e^{-3\tau} H^{-1/2}\) vs \(\Pi e^{-\tau} H^{-1/2}\); Fig. \ref{b0sink} does this for the $B=0$ case and Fig. \ref{gensink} for the $B\ne 0$ case. These trajectories are obtained using a spatial resolution of 2048. In Fig. \ref{sinkcvgc}, we indicate these trajectories in that same configuration space for two different spatial resolutions separately for the $B =0$ and $B \ne 0$ classes. As the figure shows, in both cases, the attractor appears to be independent of the chosen spatial resolution.  While the attractors for the $B=0$ and the $B\ne 0$ cases appear to be nearly the same as shown in Figs. \ref{b0sink} and \ref{gensink}, with a finer scaling, we show in Figure \ref{sinkdiff} that these attractors are different for the two cases.  Interestingly, our simulations indicate that the attractor for the $B \ne 0$ cases are independent of the specific value of $B$ although this may warrant further investigation.  
\begin{figure}
    \centering
    \includegraphics[scale=.35]{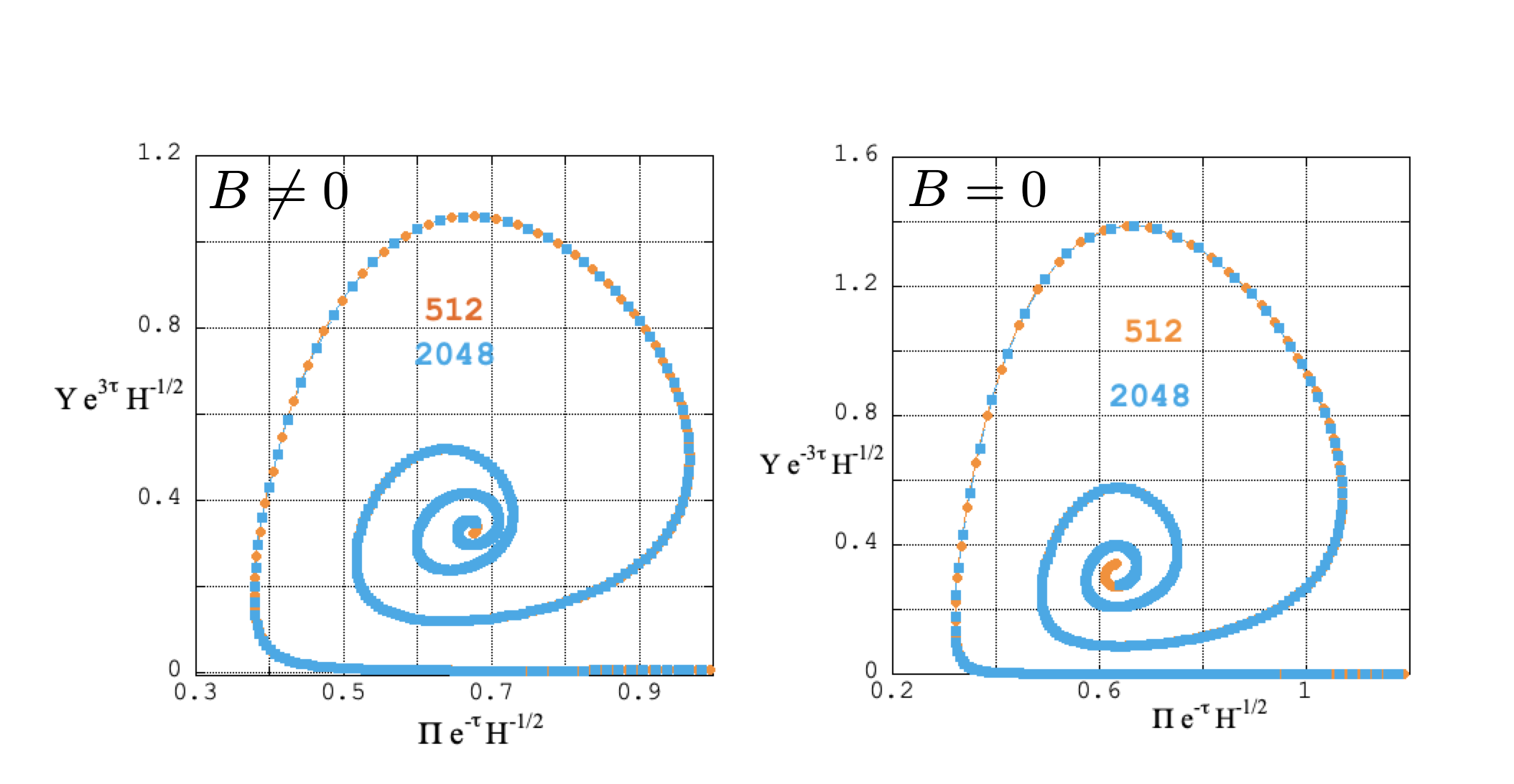}
    \caption{Plot of two trajectories from simulations in the $B\ne 0$ class (left) and $B = 0$ class (right) for spatial resolutions of 512 (orange) and 2048 (blue). The location of the attractor appears to be the same within a class and different between classes although higher precision analysis might reveal slight differences within a class due to spatial resolution differences.}\label{sinkcvgc}
\end{figure}
\begin{figure}
    \centering
    \includegraphics[scale=.25]{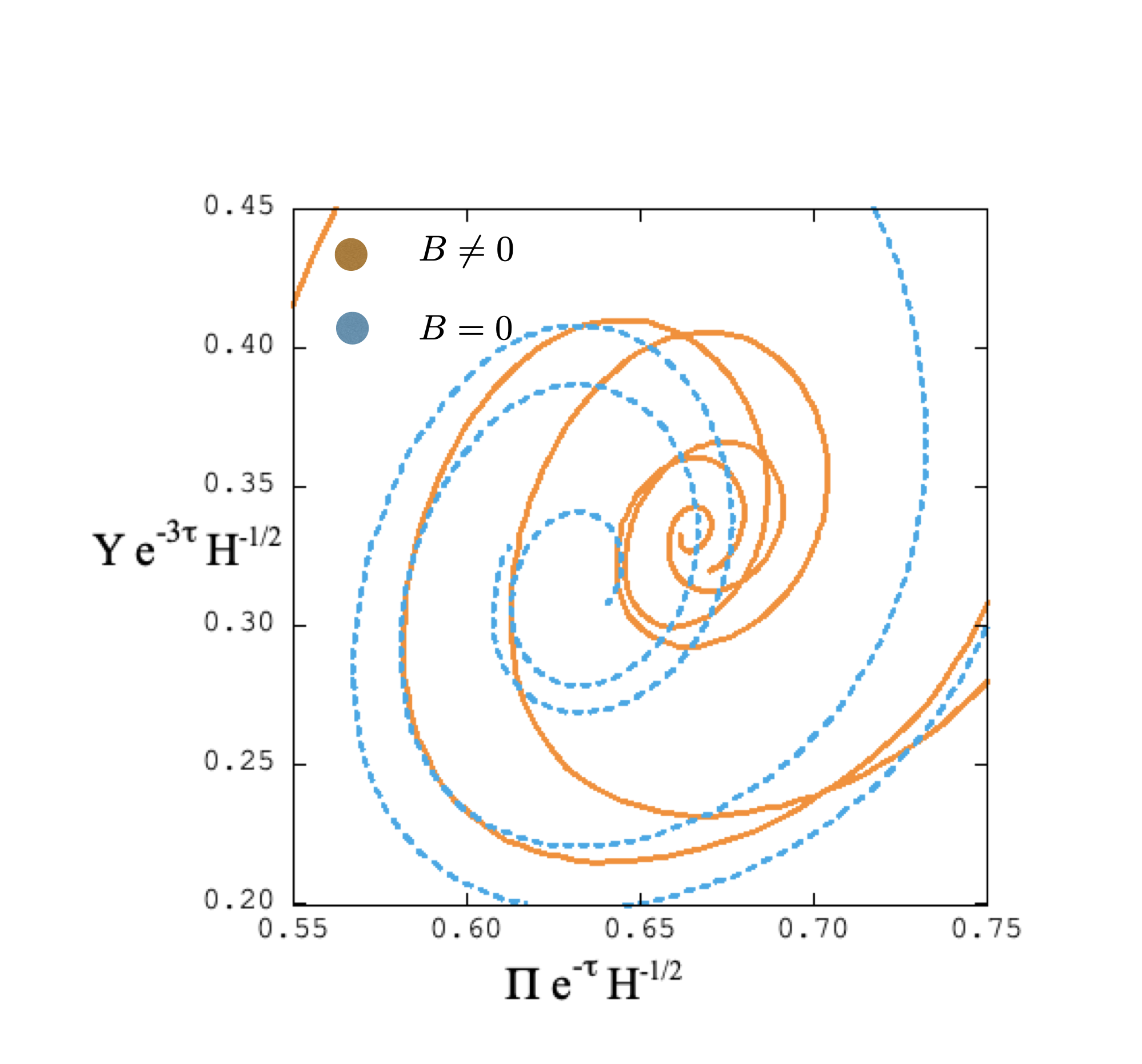}
    \caption{Plot of two trajectories from each of Fig. \ref{b0sink} -- blue dashed lines -- and Fig. \ref{gensink} --orange solid lines. We see in this figure that the attractors for the $B=0$ and the $B\ne 0$ cases are not the same.}\label{sinkdiff}
\end{figure}
\subsection{		Energy sloshing \label{sloshing}}
As in \cite{MR4062458}, we define ``energy'' variables $E$, $E_V$, and $E_Q$. The term ``energy'' reflects the field-energy-like expressions for \(E\), $E_V$ and $E_Q$ built from the gravitational wave polarizations $V$ and $Q$. Specifically, we set 
\begin{equation}
   E := E_V + E_Q 
    \label{Etot}
\end{equation}
where
\begin{equation}
E_V := \frac{1}{2} \int_{S^1}\, e^{\rho - 2 \tau} \left(V_\tau^2 + e^{-2\rho + 2\tau} V_\theta^2 \right) d\theta 
\label{ev}
\end{equation}
and
\begin{equation}
    E_Q := \frac{1}{2} \int_{S^1}\, e^{\rho - 2 \tau} \left(e^{2V-2\tau}Q_\tau^2 + e^{2V-2\rho } Q_\theta^2 \right) d \theta. \label{eq}
\end{equation}

\begin{figure}
    \centering
    \includegraphics[width=\textwidth]{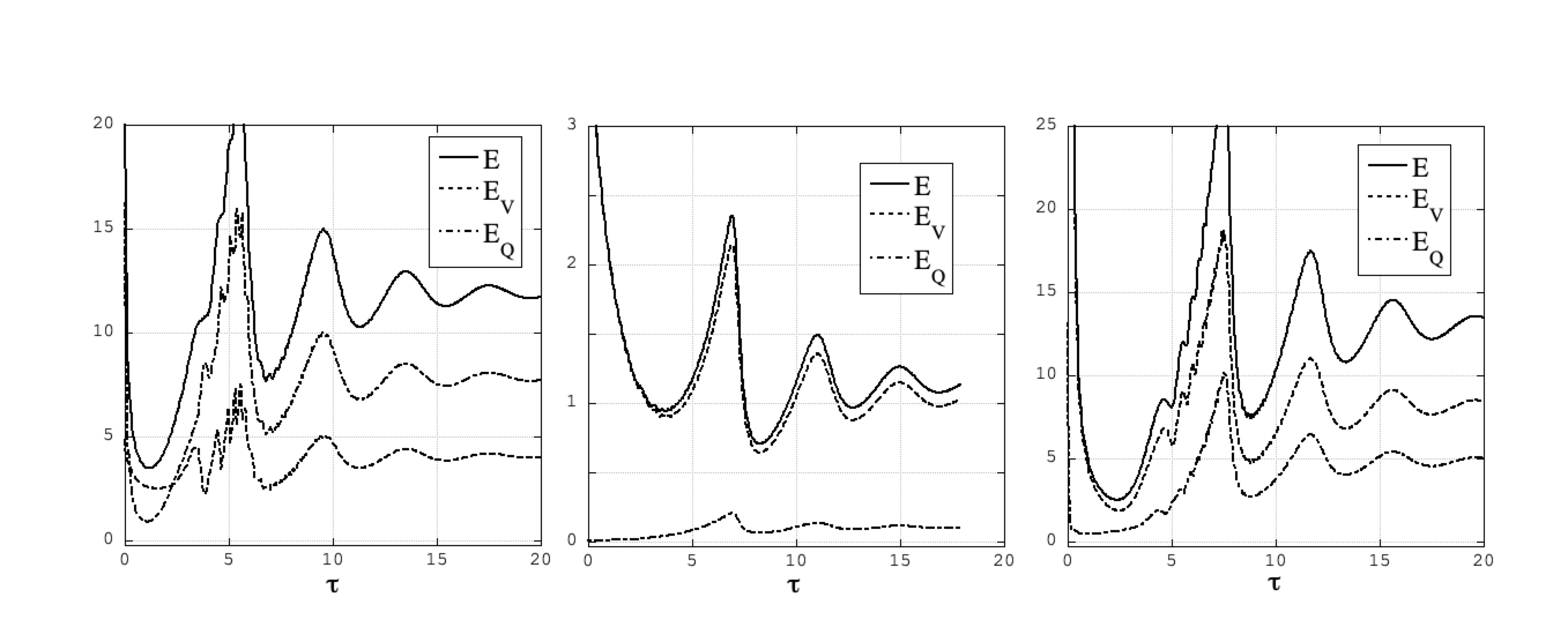}
    \caption{Three examples from simulations in the $B=0$ class showing the time evolution of $E,\ E_V, \ E_Q$ vs $\tau$. Note that in every case, the time dependence of $E_V$ and $E_Q$ follows that of $E$.}\label{sloshb0}
\end{figure}
\begin{figure}
    \centering
    \includegraphics[width=\textwidth]{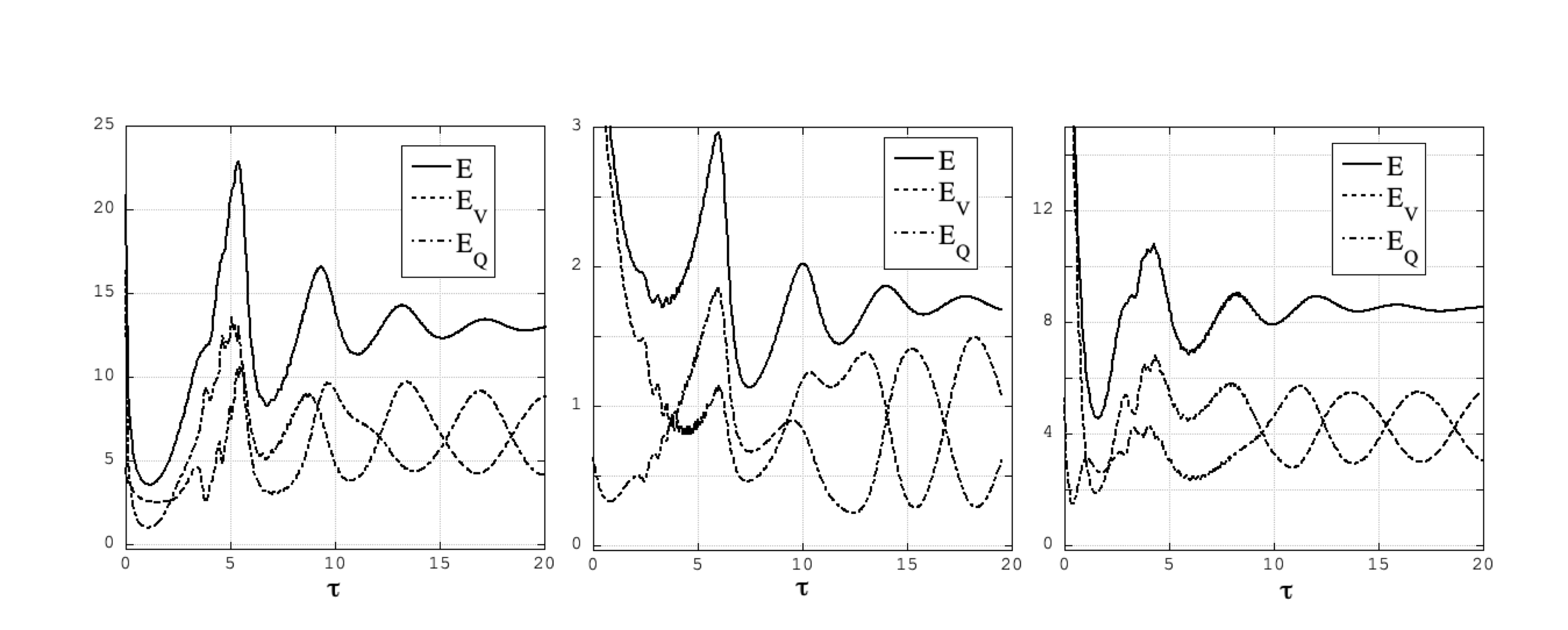}
    \caption{Three examples from simulations in the $B \ne 0$ class showing the time evolution of $E,\ E_V, \ E_Q$ vs $\tau$. Note that in every case, the time dependence of $E_V$ and $E_Q$ does not follow that of $E$ but rather $E_V$ and $E_Q$ appear to interact with each other -- an effect we call ``sloshing''.}\label{sloshgen}
\end{figure}
In Figs. \ref{sloshb0} and \ref{sloshgen}, we plot the evolution of these energy quantities for $B=0$ and $B\ne 0$ simulations. Comparing the behavior seen in these two figures, we see that the asymptotic behavior differs significantly for the two different cases. Very interestingly, we see in Fig. \ref{sloshgen} that the energy “sloshes” back and forth between the two gravitational polarizations in a manner reminiscent of the “equipartition" behavior of a coupled pendulum system for the $B\ne 0$ case, while for the $B=0$ case this does not occur.
 

Here we discuss a heuristic understanding of the “sloshing” phenomena seen in the $B\ne 0$ simulations.  The definition of $B$ in Eq. \eqref{b def} implies that the spatial average of $\pi_Q$, the canonically conjugate momentum to $Q$, vanishes if $B=0$. In some sense, since $\pi_Q = e^{2(V-\tau)} \frac{dQ}{ d\tau}$, some aspect of the $V$-$Q$ coupling is missing in the $B=0$ subclass but is present in the generic case. Our simulations indicate that the full coupling of the generic case is required to cause the sloshing (i.e., energy transfer) between the degrees of freedom indicated by the average energies.

Note that the restriction to the $B=0$ subfamily is necessary for carrying out the proofs in \cite{MR4062458}. The growth of the terms with $B\ne 0$ coefficients cannot be controlled in the model ODE system we consider there. 
It is possible that developing an alternative model ODE system which features dominant $B\ne 0$ terms could lead to a mathematical result which determines the asymptotic behavior of the generic $T^2$-Symmetric solutions in the expanding direction, at least assuming certain smallness restrictions.

At this time we have no rigorous explanation for the difference in behavior for the $B = 0$ and $B \ne 0$ ``energy'' asymptotics. As a possible consideration, we have noted above that the former class at some level removes part of the $Q$ degree of freedom. It may be possible therefore to show that this forces $Q$ to follow $P$ in phase. In the generic case, the two degrees of freedom are complete in the sense that no {\it a priori} restrictions are imposed on the $Q$-momentum (such as setting its spatial average to zero via $B=0$). It therefore may be possible to show that the nonlinear coupling between the two degrees of freedom leads to an equipartition of the energy. Methods used to prove our theorems for the $B = 0$ case fail in the generic case because the behavior of the correction term $\Lambda$ cannot be controlled. It grows in contrast to the other terms and thus defines a different effective system of approximate equations 
This will be a topic for future work.



\section{	Conclusions}\label{conclusions}

In this paper, we have provided detailed numerical evidence that $B \ne 0$ defines a class of $T^2$-Symmetric spacetimes with asymptotics different from the $B = 0$ case where rigorous mathematical results exist. The simulation code is shown to converge as expected with increasingly fine spatial resolution. In particular, the momentum constraint is shown to remain acceptably small. We have demonstrated that it is possible to find numerically the correct power laws for the asymptotic expansion of the spacetimes for generic values of $B$ and that, as shown in Table \ref{tab}, these power laws are different for the $B \ne 0$ and $B = 0$ cases. An example is shown in Fig. \ref{proof} of how, in contrast to the $B =0$ case, the $B \ne 0$ asymptotics obstruct the method of proof used in \cite{MR4062458} but also may suggest an alternative method. Finally, we show that the $B \ne 0$ spacetimes exhibit an oscillatory behavior in the energy-like terms associated respectively with the $V$ and $Q$ gravitational-wave polarizations. It is possible that this ``sloshing’’ behavior arises generically but not for the $B = 0$ subclass because a crucial part of the $Q$-degree of freedom is missing in the latter case.

While the $T^2$-Symmetric spacetime solutions of Einstein’s equations are not expected to be physically realistic models of the cosmos, the study of these solutions serves as an effective laboratory for investigating what sort of behavior one might expect to see in a larger class of cosmological models governed by Einstein’s gravitational field theory. Our focus in this paper is the behavior of the $T^2$-Symmetric spacetimes in the expanding direction rather than towards the singularity. Mathematical analyses in \cite{MR3513138} and \cite{MR4062458} have shown that subfamilies of the $T^2$-Symmetric solutions have characteristic “attractor” behavior in the expanding direction. While we have not been able to prove mathematically that the same behavior is found in the general class of $T^2$-Symmetric solutions, our numerical simulations of a very wide selection of these solutions in this paper support the contention that attractors do characterize the behavior of this general class in the expanding direction. Moreover, the attractors for the general class do not coincide with the attractors for the subfamilies of solutions, which indicates a degree of instability of the behavior of the subfamilies. Our simulations also indicate that the general class of solutions exhibits a “sloshing” behavior between the two gravitational polarizations, much like the equipartition behavior found in coupled pendulum systems. We present a very rough outline of a path for proving that the behavior seen in these numerical simulations holds for the general class of $T^2$-Symmetric spacetime solutions.

\begin{acknowledgments}
JAI received partial support for this work from NSF grants PHY-1306441 and PHY-1707427 to the University of Oregon. ANL was supported by PHY-1306441 to the University of Oregon during part of this work. 
\end{acknowledgments}

\bibliography{bib}                              

\end{document}